\newtheorem{theorem}{Theorem}[section]
\newtheorem{lemma}[theorem]{Lemma}
\newtheorem{corollary}[theorem]{Corollary}
\newcommand{\NP}{{\mathsf{NP}}}
\newcommand{\PP}{{\mathsf{P}}}
\newcommand{\ZPP}{{\mathsf{ZPP}}}
\newcommand{\maxkcomp}[1]{\textsc{Max-}{$#1$}\textsc{-OBCS}}
\newcommand{\maxwkcomp}[1]{\textsc{Max-Weighted-}$#1$\textsc{-OBCS}}
\newcommand{\disso}{{\textsc{Max-Disso-Set}}}
\begin{document}
\begin{frontmatter}

\title{The Complexity of Maximum $k$-Order Bounded Component Set Problem}
\author{Sounaka Mishra \fnref{SM}}
\address{Department of Mathematics, Indian Institute of Technology Madras,
Chennai 600 036, India}
\ead{sounak@iitm.ac.in}
\fntext[SM]{Corresponding author}
\author{Shijin Rajakrishnan }
\address{Department of Computer Science, Cornell University, New York, USA}
\ead{sr986@cornell.edu}

\begin{abstract} 
Given a graph $G=(V, E)$ and a positive integer $k$, in \textsc{Maximum $k$-Order Bounded Component Set} (\maxkcomp{k}), it is required to find a vertex set $S \subseteq V$ of maximum size such that each component in the induced graph $G[S]$ has at most $k$ vertices. We prove that for constant $k$, \maxkcomp{k} is hard to approximate within a factor of $n^{1 -\epsilon}$, for any $\epsilon > 0$, unless $\PP = \NP$. This is an improvement on the previous lower bound of $\sqrt{n}$ for \maxkcomp{2} due to Orlovich et. el. \cite{orlovich2011complexity}. We provide lower bounds on the approximability when $k$ is not a constant as well. \maxkcomp{k} can be seen as a generalization of \textsc{Maximum Independent Set} (\textsc{Max-IS}). We generalize Tur\'{a}n's greedy algorithm for \textsc{Max-IS} and prove that it approximates \maxkcomp{k} within a factor of $(2k - 1)\overline{d} + k$, where $\overline{d}$ is the average degree of the input graph $G$. This approximation factor is a generalization of Tur\'{a}n's approximation factor for \textsc{Max-IS}.
\end{abstract}

\begin{keyword}
Approximation algorithms, Graph algorithm, Dissociation set,  $k$-order bounded component
\end{keyword}
\end{frontmatter}

\section{Introduction}
In this paper, we consider the computational complexity of an optimization problem called the \textsc{Maximum $k$-Order Bounded Component Set} problem  (\maxkcomp{k}).  In \maxkcomp{k}, given a graph $G=(V, E)$ and a positive integer $k$, it is required to find a vertex set $S \subseteq V$ of maximum size such that order of each component in $G[S]$ is at most $k$.

This is a generalization of several important problems in combinatorics. For $k=1$, this problem is the same as the \textsc{Maximum Independent Set} problem (\textsc{Max-IS}), where given a graph $G=(V,E)$, the objective is to find a subset of vertices $S \subseteq V$ of maximum size such that there is no edge between any pair of vertices in $G[S]$. The case of $k=2$ gives rise to the \textsc{Maximum Dissociation Set} problem (\disso), where given a graph $G$, the objective is to find a vertex set $S \subseteq V$ of maximum size such that the degree of each vertex in $G[S]$ is at most $1$. Both these problems are {\sf NP}-hard and thus \maxkcomp{k} is {\sf NP}-hard as well. 

Finding a maximum sized independent set is one of the widely studied problems in combinatorics and its hardness to compute has prompted several attempts at developing approximate solutions.  
H\aa{}stad \cite{hastad1996clique} proved that , for any $\epsilon>0$,  \textsc{Max-IS} cannot be approximated within a factor of $n^{1-\epsilon}$, unless $\NP = \ZPP$. Later, Zukerman \cite{zuckerman2006linear} proved that  \textsc{Max-IS} cannot be approximated within a factor of $n^{1-\epsilon}$, unless $\PP = \NP$, for any $\epsilon > 0$. This lower bound result rules out the existence of a polynomial time efficient algorithm on general graphs. However, the situation is not quite as grim when restricted to special graph classes. It is polynomial time solvable to find a maximum independent set in claw-free graphs \cite{nakamura2001revision}, $P_5$-free graphs \cite{lokshantov2014independent}, and perfect graphs \cite{grotschel2012geometric}. For other classes, although finding a polynomial time optimal solution might be out of the question unless $\PP=\NP$, we can nevertheless provide good approximation algorithms. In planar graphs, there exists a PTAS for \textsc{Max-IS} \cite{baker1994approximation} whereas for graphs with bounded degree, the problem can be approximated within a constant factor \cite{papaYa}. One such algorithm greedily includes a vertex of minimum degree into the solution set, deletes that vertex along with its neighbors from the graph and then repeats this selection process until there are no more vertices to process.  This simple greedy algorithm returns an independent set which is within a factor of $(\overline{d} + 1)$ from the optimal solution (Tur\'{a}n's Theorem)~\cite{hochbaum1983efficient}. A tighter analysis has improved the approximation factor to $\frac 15(2\overline{d} + 3)$~\cite{halldorsson1997greed}.

In \cite{YannakakisM}, Yannakakis defined dissociation set and proved that it is $\NP$-complete to find a vertex set $S$ of minimum size in a given bipartite graph $G$ such that $G[V \setminus S]$ has maximum degree 1. From this result it follows that \disso{} is $\NP$-complete even when restricted to bipartite graphs. It is also known to be $\NP$-complete for $K_{1,4}$-free bipartite graphs, $C_4$-free bipartite graphs of maximum degree 3~\cite{Boliac2004OnCT}, planar graphs with maximum degree 4~\cite{papadimitriou1982complexity} and planar line graphs of planar bipartite graphs~\cite{orlovich2011complexity}.  
However, \disso{} is polynomial time solvable when the input graph is restricted to some hereditary classes of graphs, such as   $\{chair, bull\}$-free graphs, $\{chair, K_3\}$-free graphs, $\{mK_2\}$-free graphs with $m \geq 2$ \cite{orlovich2011complexity}.

Approximation algorithms for the complementary problem of \maxkcomp{k} have been designed by various researchers. In this minimization problem, we are asked to find a vertex set $S$ of minimum size in a given graph $G$ such that each component in $G[V \setminus S]$ has at most $k$ vertices. When $k=1$ and $2$, these minimization problems are known as \textsc{Minimum Vertex Cover} and  \textsc{Minimum Dissociation Set}, respectively.
In \textsc{Minimum Dissociation Set} it is required to find a vertex set $S$ of minimum cardinality in a given graph $G=(V, E)$ such that degree of each vertex in $G[V\setminus S]$ is at most 1. It is interesting to observe that \textsc{Minimum Dissociation Set} is equivalent to \textsc{Minimum $P_3$ Vertex Cover} in which it is required to find a vertex set $S$ of minimum size such that $G[V \setminus S]$ does not have a $P_3$ (path on 3 vertices) as a subgraph.  Both \textsc{Minimum Vertex Cover} and \textsc{Minimum Dissociation Set} are known to be approximable within  a factor of $2$ \cite{fujito}.   For larger values of $k$, the deletion problem is known to be approximable within a factor of $k$ and approximable within a factor of $(k-1)$ when the input graph has girth at least $k$~\cite{zhang2014approximation}. However, to the best of our knowledge, there is no known result about the approximability of \maxkcomp{k} for arbitrary graphs.\\

\noindent
{\bf Our Contribution.} In this paper, we explore the approximability and inapproximability of  \maxkcomp{k}, for different integral values of $k$. In Section 3, we prove inapproximability results by establishing an approximation preserving reduction from \textsc{Max-IS}, which shows that for constant $k$, this problem has no efficient approximation algorithms for general graphs. The reduction used can be composed to prove hardness results even for non-constant $k$. In particular, we show that for $k= O(\frac{n}{\log n})$, it is not possible to develop an algorithm with a better approximation guarantee than $O(\log n)$, unless $\PP = \NP$. Having established the hardness of computing efficient solutions to this problem, in the subsequent sections we concentrate on developing upper bounds. In Section 4, we prove that \maxwkcomp{k} can be approximated within a factor of $\Delta$, where $\Delta$ is the maximum degree of the input graph $G$. This also proves that the problem can be approximated within a constant factor for degree bounded graphs. It is important to note that this bound holds not just for the unweighted case but also for the weighted version of the problem, where there is a weight function $w: V \to \mathbb{R}^+$ as an additional input and the objective is to find a vertex set $S$ of maximum weight such that $G[S]$ has no component whose order exceeds $k$. The technique used in the algorithm is the Local Ratio Method~\cite{bar2004local}. In Section 5, we build upon the algorithm for \textsc{Max-IS} to develop a greedy algorithm for \disso, with a performance guarantee of $(3\overline{d} + 2)$, where $\overline{d}$ is the average degree of the graph. We then generalize the algorithm and the analysis to the larger problem with arbitrary values of $k$ and prove that the extension of the algorithm yields a solution which is within a factor of $(2k-1)\overline{d} + k$ from the optimal solution. This can be observed as a generalization of Tur\'{a}n's bound mentioned earlier for \textsc{Max-IS}. Finally, in Section 6, we establish a reduction to \textsc{Max-IS} and use existing upper bounds for \textsc{Max-IS}~\cite{halldorsson2004approximations} to prove that \maxkcomp{k} can be approximated within a factor of $O(\frac{k\Delta \log \log \Delta}{\log \Delta})$, for sufficiently large values of $\Delta$.

\section{Notations}
Throughout this paper, we assume that any graph $G=(V, E)$ mentioned is simple and undirected, with $|V|=n$ vertices and $|E|=m$ edges. Given a graph $G = (V,E)$, the open neighborhood $N_G(v)$ of a vertex $v$ is the set of vertices those are adjacent to $v$ in $G$ and the closed neighborhood $N_G[v]$ of $v$ is defined as $N_G[v] = N_G(v)\cup \{v\}$. For a given set $P \subseteq V$, we define $N_{P}(v) = N_G(v) \cap P$, for each $v \in V$.
The degree of a vertex $v$ in the graph $G$ is defined as $d_G(v)= |N_G(v)|$, the maximum degree of the graph is $\Delta_G = \max_{v \in V}d_G(v)$. When the underlying graph $G$ is unambiguous, we drop the subscript, for example we use just $\Delta$ to denote the maximum degree in $G$ instead of $\Delta_G$. The average degree of the graph $G$ is $\overline{d}=\frac 1n \sum_{v \in V}d_G(v)$. There is an interesting relationship between the average degree of a graph and the total number of edges in the graph, which we use in this paper, viz. the sum of the degrees of all the vertices in a graph is twice the number of edges. So $\sum_{v\in V}d_G(v) = 2|E|$, which gives us $|V|\overline{d} = 2|E|$.

Given a subset of vertices, $S \subseteq V$, the graph induced on this set, $G[S]$, is the subgraph whose vertex set is $S$ and the edge set is the subset of edges $E$ such that both endpoints are in $S$, i.e, $G[S] = (V',E')$ where $V' = S$ and $E' = \{(u,v) \in E \mid u \in S \mbox{ and } v \in S \}$.

A vertex set $S \subseteq V$ is called an independent set if $G[S]$ has no edges. We shall denote $\alpha(G)$ as the size of a maximum independent set in $G$. An independent set $S$ is {\em maximal} if it is not a proper subset of another independent set of $G$. A set of vertices $S$ of $G$ is called a dissociation set if degree of each vertex in $G[S]$ is at most 1 (equivalently, each component in $G[S]$ has at most two verices). In the maximum dissociation set problem (\disso), the objective is to find a dissociation set of maximum size in a given graph $G$. 
A set $S \subseteq V$ is called a $k$-component set in $G$  if each component in $G[S]$ has at most $k$ vertices. We shall denote the size of a maximum $k$-component set in $G$ by ${\sf comp}_k(G)$. 

\section{Hardness of approximation}
In this section, we shall prove inapproximability results for \maxkcomp{k} by establishing an {\sf AP}-reduction~\cite{ACP95} from \textsc{Max-IS}.
The following lower bound result for \textsc{Max-IS} makes this reduction useful.

\begin{theorem} {\rm~\cite{zuckerman2006linear}} \label{thmHastad}
Unless $\PP = \NP$, for any $\epsilon > 0$, \textsc{Max-IS} is not  approximable within a factor of $n^{1 - \epsilon}$.
\end{theorem}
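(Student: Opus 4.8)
\begin{proofoutline}
The plan is to reconstruct the standard proof: derive the statement from the $\mathsf{PCP}$ theorem via a randomness-efficient FGLSS reduction, the ingredient that replaces H\aa{}stad's weaker hypothesis $\NP = \ZPP$ by the hypothesis $\PP = \NP$ being an \emph{explicit} disperser (this last step is Zuckerman's contribution). First I would invoke $\NP = \mathsf{PCP}[O(\log n), O(1)]$: every $L \in \NP$ has a verifier that uses $r = O(\log n)$ random bits, reads $O(1)$ bits of the proof, accepts correct proofs of yes-instances with probability $1$, and accepts any claimed proof of a no-instance with probability at most $1/2$.

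Next I would form the FGLSS graph $H_x$ of an instance $x$: its vertices are the pairs $(\rho,a)$ with $\rho \in \{0,1\}^r$ and $a$ an accepting assignment to the proof positions queried on randomness $\rho$, and $(\rho,a)$ is adjacent to $(\rho',a')$ whenever the two assignments conflict on a common position (in particular whenever $\rho = \rho'$ and $a \ne a'$). An independent set in $H_x$ is a family of pairwise consistent accepting local views --- i.e.\ a partial proof accepted on every random string it meets --- so $\alpha(H_x) = 2^r \cdot \max_\pi \Pr_\rho[\text{verifier accepts }\pi]$; by perfect completeness this equals $2^r$ when $x$ is a yes-instance, and by soundness it is at most $\tfrac12 2^r$ when $x$ is a no-instance. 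Since the query complexity is constant, $|V(H_x)| = 2^r \cdot 2^{O(1)} = \mathrm{poly}(|x|)$, so this is already a polynomial-time reduction with a constant multiplicative gap, hence $\NP$-hardness of approximating \textsc{Max-IS} within any constant.

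To amplify this constant gap up to $|V|^{1-\epsilon}$ I would compose the verifier with itself $t$ times while being economical with randomness. Running $t$ fully independent copies drops the soundness to $2^{-t}$ but multiplies the randomness by $t$, so the vertex count becomes $2^{\Theta(tr)}$, which stays polynomial only when $t$ is constant --- in which case the gap is constant too. Instead I would generate the $t$ random strings fed to the copies from one short seed through a disperser (for instance, the strings visited along a random walk on an expander): completeness stays $1$, the no-instance acceptance probability still decays geometrically in $t$, but the total randomness is now dominated by $t$ rather than $tr$. Tuning the disperser's seed length and number of output strings keeps $|V(H_x)|$ polynomial in $|x|$ while raising the ratio between the yes- and no-values of $\alpha$ above $|V(H_x)|^{1-\epsilon}$, for any prescribed $\epsilon > 0$.

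The last --- and most delicate --- point is to carry out this amplification \emph{deterministically}, which is exactly what replaces the hypothesis $\NP = \ZPP$ by $\PP = \NP$: in place of randomly chosen strings one substitutes Zuckerman's explicitly constructible disperser with near-optimal entropy loss and seed length $O(\log n)$, so that $H_x$ is computable from $x$ in deterministic polynomial time; an $N^{1-\epsilon}$-approximation for \textsc{Max-IS} on the $N$-vertex graph $H_x$ would then decide $L$ in deterministic polynomial time. I expect the main obstacle to be the parameter bookkeeping --- checking that the disperser can be tuned to deliver precisely the exponent $1-\epsilon$ for every $\epsilon > 0$ with a polynomial-size graph, and that perfect completeness is preserved throughout --- together with, underlying everything, the explicit disperser construction itself.
\end{proofoutline}
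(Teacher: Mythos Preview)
The paper does not prove this theorem at all: it is quoted from \cite{zuckerman2006linear} as an external result and used as a black box to derive the inapproximability of \maxkcomp{k}. So there is no proof in the paper to compare your outline against.

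That said, your outline is a faithful sketch of Zuckerman's original argument --- PCP theorem, FGLSS graph, randomness-efficient parallel repetition via explicit dispersers to derandomize H\aa{}stad's earlier $\NP = \ZPP$ proof --- and the places you flag as delicate (parameter bookkeeping for the exponent $1-\epsilon$, and the disperser construction itself) are indeed where the real work lies. For the purposes of the present paper, though, none of this machinery is needed or reproduced; the authors simply invoke the statement.
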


The hardness result mentioned in Theorem \ref{thmHastad} establishes the hardness of approximation for \maxkcomp{1}. Next, we establish the hardness of approximation for \maxkcomp{k} for any constant positive integer $k$. We make use of the following reduction for this purpose. 

\begin{lemma} \label{lem:k-2k-reduction}
 For any $k>0$, \maxkcomp{k} $\leq_{\text{\sf AP}}$ \maxkcomp{2k}.   
\end{lemma}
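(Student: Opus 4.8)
We need an AP-reduction from \maxkcomp{k} to \maxkcomp{2k}. The natural idea: given an instance $G=(V,E)$ of \maxkcomp{k}, construct an instance $G'$ of \maxkcomp{2k} so that $k$-order bounded component sets in $G$ correspond to $2k$-order bounded component sets in $G'$ of (essentially) twice the size. The cleanest construction is to "double" each vertex: replace each $v\in V$ by two adjacent copies $v_1,v_2$ (a $K_2$), and for each edge $(u,v)\in E$ put a complete bipartite join (or at least enough edges) between $\{u_1,u_2\}$ and $\{v_1,v_2\}$ in $G'$. Then a component of order $j$ in $G[S]$ blows up to a component of order $2j$ in $G'$ on the doubled vertex set, so $S$ is a $k$-component set in $G$ iff its doubled version is a $2k$-component set in $G'$. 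This gives ${\sf comp}_{2k}(G') \ge 2\,{\sf comp}_k(G)$.

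The real work is the reverse direction and the solution-mapping back. First I would show ${\sf comp}_{2k}(G') \le 2\,{\sf comp}_k(G)$: take an optimal $2k$-component set $S'$ in $G'$; define $S = \{v\in V : v_1\in S'\text{ or }v_2\in S'\}$. Because the two copies of $v$ are adjacent and the edge-gadget makes the copies of $u$ and $v$ mutually adjacent whenever $(u,v)\in E$, each component $C$ of $G[S]$ lifts to a \emph{connected} subgraph of $G'[S']$ on the vertex set $\{w_i : w\in C\}$, which therefore has at most $2k$ vertices; since $|C|$ of the $w$'s contribute at least one copy, this connected piece has $\ge |C|$ vertices but could have up to $2|C|$ — so $|C|\le 2k$ is not immediately $|C|\le k$. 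To fix this I would argue that we may assume the optimal $S'$ is "closed under doubling" on each component, or more carefully: among components of $G[S]$, a component using $j$ distinct base vertices has its lift containing between $j$ and $2j$ vertices of $S'$, and a short exchange argument shows an optimal $S'$ can be modified (without decreasing size) so that within each connected lifted piece either both copies of every base vertex are in, or we can split — ultimately giving $|C|\le k$. Then $S$ is a $k$-component set with $|S| \ge |S'|/2$.

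Combining the two inequalities yields ${\sf comp}_{2k}(G') = 2\,{\sf comp}_k(G)$, and the solution map $S'\mapsto S$ together with the analysis above shows that a solution $S'$ with $|S'|\ge {\sf comp}_{2k}(G')/\rho$ yields $|S|\ge {\sf comp}_k(G)/\rho$ (the factor of $2$ cancels), which is exactly what an AP-reduction with $\alpha=1$ requires; the construction of $G'$ is clearly polynomial-time.

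The step I expect to be the main obstacle is the exchange/normalization argument in the reverse direction: controlling what happens when an optimal $2k$-component set in $G'$ picks only \emph{one} of the two copies of some vertices, so that a lifted component could a priori span up to $2k$ base vertices rather than $k$. The edge gadget must be chosen carefully — a full bipartite join between the $K_2$'s, rather than a single edge between copies — precisely so that "one copy of $u$ adjacent to one copy of $v$" still forces $u$'s and $v$'s copies into the same component, and so that a local swap replacing a "half-present" vertex pair either completes it or removes it without creating a larger component or losing cardinality. Making this swap argument fully rigorous (and verifying it terminates) is where the care is needed; everything else is bookkeeping.
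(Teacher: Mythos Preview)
Your construction of $G'$ is exactly the paper's (two copies of $G$, a complete bipartite join between $\{u_1,u_2\}$ and $\{v_1,v_2\}$ for every edge $(u,v)\in E$, plus the edge $(v_1,v_2)$ for every $v$), and the forward direction is identical. The difference---and the genuine gap in your proposal---is the backward solution map. You take $S=\{v: v_1\in S'\text{ or }v_2\in S'\}$ and then try to repair it via an exchange argument, but the swap you sketch (``complete or remove a half-present pair'') does not obviously preserve cardinality without enlarging a component: if $v_1$ lies in a component of $G'[S']$ already of size $2k$, adding $v_2$ overshoots, while removing $v_1$ loses a vertex. You correctly flag this as the obstacle, but you do not actually resolve it, and there is no evident local move that works in general.

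The paper sidesteps the exchange argument entirely with a simpler, direct construction. First it observes that distinct components $T_i,T_j$ of $G'[S']$ project to \emph{disjoint} subsets of $V$ (if $u_1\in T_i$ and $u_2\in T_j$ the edge $(u_1,u_2)$ would merge them). Then, for each component $T_i$ of $G'[S']$, it takes the larger of $T_i\cap V_1$ and $T_i\cap V_2$, truncates that set arbitrarily to at most $k$ vertices, and projects to $V$; call the result $J_i$. Since $|T_i|\le 2k$, the larger side has size $\ge |T_i|/2$, and truncation to $k$ still keeps at least $|T_i|/2$ vertices; thus $|S|=\sum_i|J_i|\ge |S'|/2$. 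Each $J_i$ has at most $k$ vertices, and the disjointness of projections together with the complete join on edges guarantees there is no $G$-edge between $J_i$ and $J_j$, so $S$ is a $k$-component set. This yields $|S^*|/|S|\le |S'^*|/|S'|$ directly, with no normalization of $S'$ needed. Replacing your projection-plus-exchange step with this per-component ``take the bigger side and truncate'' step closes the gap.
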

\begin{proof}
Given an instance $G = (V,E)$ of \maxkcomp{k}, we construct an instance $G' = (V',E')$ of \maxkcomp{2k} as follows. First we make two copies $G_1=(V_1, E_1)$ and $G_2=(V_2, E_2)$ of the graph $G$ with $V_1= \{v_1 \mid v \in V\}$, $V_2 = \{v_2 \mid v \in V\}$ and $E_1, E_2$ are defined accordingly.
Then, for each edge $(u, v) \in E$, we add two edges $(u_1, v_2)$ and $(u_2, v_1).$  Finally, we introduce the edge $(u_1, u_2)$ for each $u \in V$. Thus we have $V' = V_1 \cup V_2$ and $E' = E_1 \cup E_2 \cup \{(u_1,v_2), (u_2, v_1) \mid (u,v) \in E\} \cup \{(u_1, u_2) \mid u \in V\}$. It is easy to observe that $|V'| = 2|V|$. For an illustration we refer to Figure \ref{lemma3.2}.

\begin{figure}[h]
\begin{center}
    \includegraphics[scale=.5]{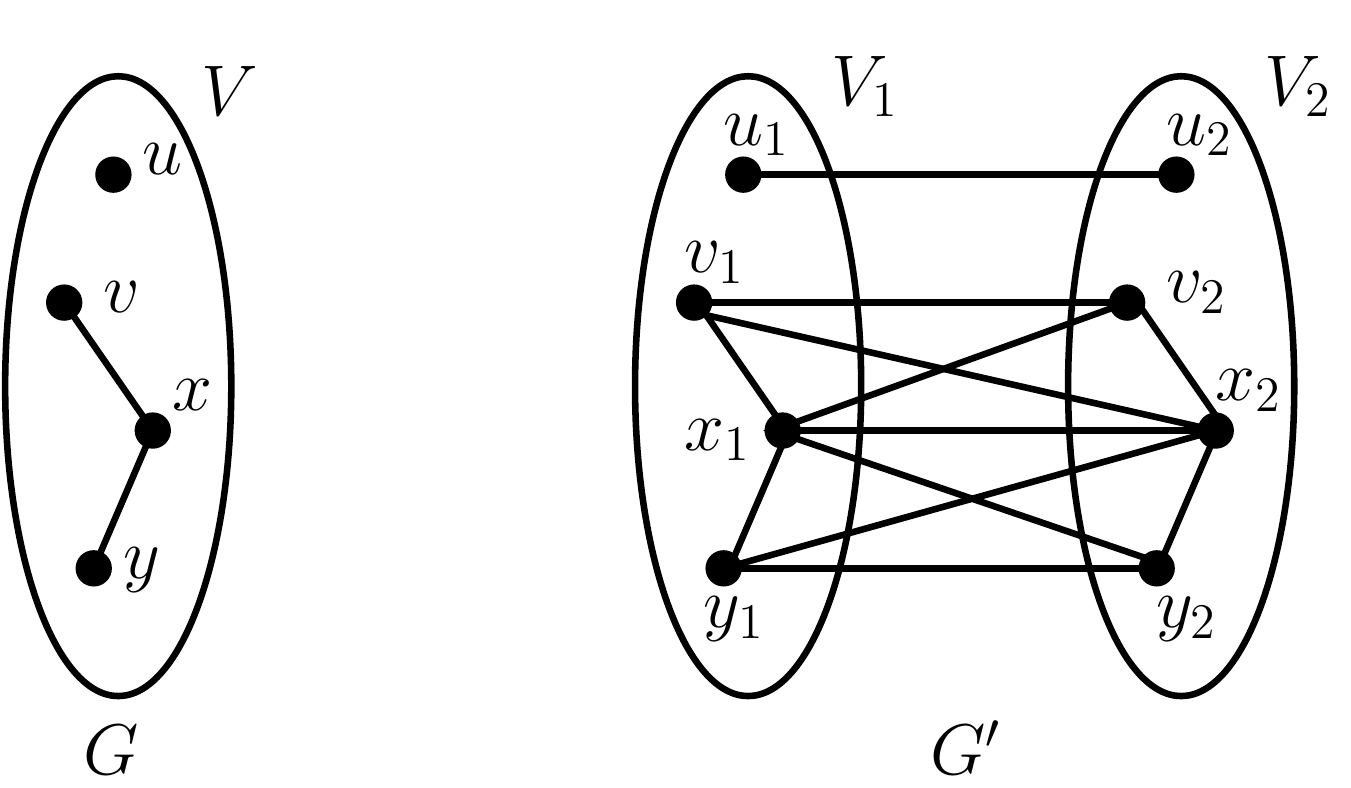}
\end{center}
\caption{Sketch of the construction of $G'$ from $G$}
\label{lemma3.2}
\end{figure}

Let $S^*$ be a maximum $k$-component set in $G$ and $S'^*$ be a maximum $2k$-component set in $G'$. 
From the construction of $G'$ from $G$, it follows that the set $S'= \{v_1, v_2 \mid v \in S\}$ is a $2k$-component set in $G'$, for every $k$-component set $S$ in $G$. Also $|S'|= 2|S|$. Since these two problems are maximization problems we have $2|S^*| \leq |S'^*|.$

Let $S'$ be a $2k$-component set in $G'$, consisting of $t$ components. From $S'$ we will construct a $k$-component set $S$ in $G$ with $|S| \geq \frac{1}{2}|S'|$.

First, we prove a property of the components in $G'[S']$ which is as follows. For any two distinct components $T_1$ and $T_2$, if we look at the vertices in $G$ induced by the components then there are no common vertices. In other words, the sets $\{v\in V \mid v_1\in T_1 \mbox{ or } v_2 \in T_1\}$ and $\{v\in V \mid v_1 \in T_2 \mbox{ or } v_2 \in T_2\}$ are disjoint. Suppose that these two sets  have a common vertex, say $u$. Then by construction,  $u_1 \in T_1$ and $u_2 \in T_2$ and $(u_1, u_2) \in E'$. This implies that $T_1$ and $T_2$ are connected, which is a contradiction.

Now, we construct a set $S \subseteq V$ from $S'$ as follows. Let $T_1, T_2, \ldots, T_t$ be the components in $G'[S']$. For each component $T_i$ in $G'[S']$, if $|T_i\cap V_1| \geq |T_i \cap V_2|$ and $|T_i \cap V_1| \geq k$ choose a set $J_i$ of any $k$ vertices from $T_i \cap V_1$; otherwise if $|T_i \cap V_1| < k$, take $J_i = T_i \cap V_1.$ Finally, we update $S=S \cup \{u \in V \mid u_1 \in J_i\}$. If $|T_i \cap V_2| > |T_i \cap V_1|$, then we take vertices from $V_2$ into the set $J_i$ and update $S = S \cup \{u\in V \mid u_2\in J_i\}$.  

Let $S= J_1 \cup J_2 \cup \ldots \cup  J_t$, where $J_i$ is the vertex set selected from the component $T_i$ of $G'[S']$. We claim that $S$ is a $k$-component set in $G$. It is easy to observe that $G[J_i]$  can have at most $k$ vertices. Next, we need to prove that there is no edge between a vertex $u$ in $J_i$ and a vertex $v$ in $J_j$, for $i \neq j$. The existence of such an edge would imply that the corresponding components $T_i$ and $T_j$ are not distinct (because of the edges $(u_1, v_1), (u_1, v_2), (u_2, v_1), (u_2, v_2)$). 

From the construction of vertex set $J_i$ from the component $T_i$, it follows that cardinality of $J_i$ is at least half of the vertices in $T_i$. Therefore, $|S| \geq \frac{1}{2}|S'|.$

Therefore, for any given $2k$-component set $S'$ in $G'$, one can construct a $k$-component set $S$ in $G$ in polynomial time, such that $\frac{|S^*|}{|S|} \leq \frac{|S'^*|}{|S'|}.$ 
\end{proof}

\begin{corollary} \label{cor3.3}
 For $k>0$, \textsc{Max-IS} is {\sf AP}-reducible to \maxkcomp{k} with size amplification of $kn$, where $n$ is the number of vertices in the input instance of \textsc{Max-IS} and $k = 2^i$, for some nonnegative integer $i$.
\end{corollary}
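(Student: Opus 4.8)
The plan is to obtain the reduction by iterating the construction of Lemma~\ref{lem:k-2k-reduction}. Since \textsc{Max-IS} is exactly \maxkcomp{1} and $k = 2^{i}$, I would chain $i$ copies of that lemma, one for each value $2^{j}$ with $0 \le j \le i-1$, to get the sequence of {\sf AP}-reductions \maxkcomp{1} $\leq_{\text{\sf AP}}$ \maxkcomp{2} $\leq_{\text{\sf AP}}$ \maxkcomp{4} $\leq_{\text{\sf AP}} \cdots \leq_{\text{\sf AP}}$ \maxkcomp{2^{i}}. Requiring $k$ to be a power of $2$ is precisely what makes this telescoping possible. Because {\sf AP}-reducibility is transitive — the composition of the instance maps is polynomial-time computable, and so is the composition of the solution-recovery maps — this collapses to a single {\sf AP}-reduction \textsc{Max-IS} $\leq_{\text{\sf AP}}$ \maxkcomp{k}.

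Next I would track the two quantities appearing in the statement. For the size: the construction of Lemma~\ref{lem:k-2k-reduction} yields $|V'| = 2|V|$, so applying it $i$ times to an $n$-vertex instance of \textsc{Max-IS} produces a \maxkcomp{k} instance on $2^{i}n = kn$ vertices, which is the asserted size amplification. For the approximation ratio: each application of the lemma gives $\frac{|S^{*}|}{|S|} \le \frac{|S'^{*}|}{|S'|}$ for the recovered solution $S$, i.e. it preserves the performance ratio exactly, equivalently it is an {\sf AP}-reduction with multiplicative constant $\alpha = 1$. Since composing {\sf AP}-reductions multiplies these constants, the composed reduction from \textsc{Max-IS} to \maxkcomp{k} again has $\alpha = 1$, so it turns any $r$-approximate solution of the \maxkcomp{k} instance into an $r$-approximate solution of the original \textsc{Max-IS} instance.

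I expect no real obstacle here: the substance is entirely in Lemma~\ref{lem:k-2k-reduction}, and the corollary is just its $i$-fold iteration. The only points needing a line of care are the degenerate case $i = 0$ (then $k = 1$, the reduction is the identity on \textsc{Max-IS}, and the size amplification reads $n$, consistent with $kn$), and the remark that since the composition involves only the $i$ polynomial-time maps coming from the individual reductions, the resulting instance and solution maps are themselves polynomial-time computable, so the composite object is genuinely an {\sf AP}-reduction.
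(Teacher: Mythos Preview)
Your proposal is correct and matches the paper's own proof essentially line for line: compose the reduction of Lemma~\ref{lem:k-2k-reduction} $i = \log k$ times, note that the instance size doubles at each step to give $2^{i}n = kn$, and use that the lemma's reduction is ratio-preserving so the composite is too. The paper says nothing beyond this, so your handling of the $i=0$ case and the explicit remark on transitivity of {\sf AP}-reductions are, if anything, slightly more careful than the original.
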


\begin{proof}
We reduce \textsc{Max-IS} to \maxkcomp{k} by composing the reduction in Lemma \ref{lem:k-2k-reduction} $\lceil \log k \rceil$ times to get an instance of \maxkcomp{k}. Since the size of the new instance in the reduction of  Lemma \ref{lem:k-2k-reduction} doubles, on applying the reduction $\lceil \log k \rceil$ times, the size of the \maxkcomp{k} instance is $2^{\lceil \log k \rceil}n \approx kn$. Since the reduction in Lemma \ref{lem:k-2k-reduction} is a ratio preserving reduction, the composite reduction from \textsc{Max-IS} to \maxkcomp{k} is a ratio preserving reduction.  Therefore, \textsc{Max-IS} $\leq_{\text{\sf AP}}$ \maxkcomp{k}.
\end{proof}

Using Theorem \ref{thmHastad} and Corollary \ref{cor3.3} we have the following result for \maxkcomp{k}.

\begin{corollary} \label{clbBcis}
For any $\epsilon >0$, \maxkcomp{2^i} is hard to approximate within a factor of $n^{1 - \epsilon}$,  unless $\PP = \NP$, where $i$ is a constant nonnegative integer.
\end{corollary}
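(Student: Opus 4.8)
The plan is to obtain the corollary by chaining the inapproximability of \textsc{Max-IS} in Theorem~\ref{thmHastad} with the approximation-preserving reduction of Corollary~\ref{cor3.3}. The one point that needs attention is that this reduction enlarges the instance: an $n$-vertex \textsc{Max-IS} instance becomes a $2^i n$-vertex instance of \maxkcomp{2^i}. Since $i$ is a fixed constant this enlargement is only by a constant factor, and I claim it costs only a sub-polynomial loss in the approximation ratio, which can be folded into $\epsilon$.

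Concretely, I would argue by contradiction. Suppose that for some $\epsilon > 0$ there is a polynomial-time algorithm $\mathcal{A}$ approximating \maxkcomp{2^i} within a factor of $n^{1-\epsilon}$. Given an instance $G$ of \textsc{Max-IS} on $n$ vertices, apply the reduction of Corollary~\ref{cor3.3} (the $i$-fold composition of Lemma~\ref{lem:k-2k-reduction}, noting $\lceil \log 2^i\rceil = i$) to produce, in polynomial time, an instance $G'$ of \maxkcomp{2^i} on $N = 2^i n$ vertices. Run $\mathcal{A}$ on $G'$ to get a $2^i$-component set $S'$ with ${\sf comp}_{2^i}(G')/|S'| \le N^{1-\epsilon}$, and map $S'$ back to an independent set $S$ of $G$ using the polynomial-time solution transformation behind Corollary~\ref{cor3.3}. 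Because each application of Lemma~\ref{lem:k-2k-reduction} is ratio preserving (it guarantees $|S^*|/|S| \le |S'^*|/|S'|$), so is the composition, whence $\alpha(G)/|S| \le {\sf comp}_{2^i}(G')/|S'| \le N^{1-\epsilon}$. Now $N^{1-\epsilon} = (2^i)^{1-\epsilon} n^{1-\epsilon}$, and since $(2^i)^{1-\epsilon}$ is a constant there is a constant $n_0 = n_0(i,\epsilon)$ with $(2^i)^{1-\epsilon} n^{1-\epsilon} \le n^{1-\epsilon/2}$ for all $n \ge n_0$; for the finitely many graphs on fewer than $n_0$ vertices we simply store a maximum independent set in a constant-size table. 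This gives a polynomial-time algorithm approximating \textsc{Max-IS} within $n^{1-\epsilon/2}$ on every instance, contradicting Theorem~\ref{thmHastad} with $\epsilon/2 > 0$ unless $\PP = \NP$. (For $i = 0$ the statement is just Theorem~\ref{thmHastad}, since \maxkcomp{1} is \textsc{Max-IS}.)

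I do not expect a genuine obstacle here — the argument is essentially bookkeeping on top of Corollary~\ref{cor3.3} — but the step that carries all the weight, and the only one that can fail, is absorbing the size amplification into the exponent. This works precisely because $k = 2^i$ is constant, so the amplification factor $2^i$ is constant; if $k$ were allowed to grow with $n$ (say $k = \Theta(n/\log n)$) the amplification would be polynomial in $n$ and the same scheme would only yield the weaker $\Omega(\log n)$-type lower bound established later in this section. For the constant case, the scheme goes through directly.
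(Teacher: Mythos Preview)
Your argument is correct and is precisely the intended derivation: the paper itself gives no proof beyond the sentence ``Using Theorem~\ref{thmHastad} and Corollary~\ref{cor3.3} we have the following result,'' and what you have written is exactly the routine bookkeeping (constant size amplification absorbed into the exponent) that this sentence leaves implicit. There is nothing to add or correct.
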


Next we will extend this lower bound result for \maxkcomp{k}, where $k$ is a constant  nonnegative integer.

\begin{theorem} \label{lbBcis}
For any $\epsilon >0$, \maxkcomp{k} is hard to approximate within a factor of $n^{1 - \epsilon}$,  unless $\PP = \NP$, where $k$ is a constant nonnegative integer.
\end{theorem}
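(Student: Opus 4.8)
The case $k=1$ is exactly Theorem~\ref{thmHastad}, so assume $k\geq 2$ is a fixed constant. The plan is to bypass the power-of-two restriction of Corollary~\ref{cor3.3} by giving one direct, ratio-preserving reduction from \textsc{Max-IS} to \maxkcomp{k} (essentially the construction of Lemma~\ref{lem:k-2k-reduction} with the ``two cross-linked copies'' replaced by ``$k$ copies forming a clique with complete interconnections''). Given an instance $G=(V,E)$ of \textsc{Max-IS}, I would build $H$ by replacing each $v\in V$ by a clique $Q_v$ on $k$ vertices and, for every edge $(u,v)\in E$, joining $Q_u$ and $Q_v$ by a complete bipartite graph; then $|V(H)|=kn$ and $H$ is computable in polynomial time, so the reduction has size amplification $kn$.

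The technical core is the identity ${\sf comp}_k(H)=k\,\alpha(G)$ together with a polynomial-time map sending a $k$-component set of $H$ to an independent set of $G$ of at least a $1/k$ fraction of its size. The easy direction: if $I$ is independent in $G$ then $\bigcup_{v\in I}Q_v$ is a $k$-component set of $H$ (each $Q_v$ is its own component, of size exactly $k$), so ${\sf comp}_k(H)\geq k\,\alpha(G)$. For the other direction, let $S$ be a $k$-component set of $H$ and put $W=\{\,v:S\cap Q_v\neq\emptyset\,\}$. The key point is that the connected components of $H[S]$ are in bijection with the connected components of the induced subgraph $G[W]$: if $v,w$ lie in the same component $C$ of $G[W]$, then walking along a path of $C$ and using that consecutive cliques are completely joined shows $\bigcup_{x\in C}(S\cap Q_x)$ is connected in $H[S]$, and it has no edge to the rest of $S$ by construction. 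Hence $\sum_{x\in C}|S\cap Q_x|\leq k$ for each component $C$ of $G[W]$, so $|S|\leq k\cdot c(G[W])$, where $c(\cdot)$ counts connected components. Since picking one vertex from each component of $G[W]$ yields an independent set of $G$, we get $c(G[W])\leq\alpha(G)$, hence $|S|\leq k\,\alpha(G)$; and this transversal of the components of $G[W]$ is exactly the independent set we output, of size $c(G[W])\geq|S|/k$.

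It remains to conclude in the standard way. The reduction preserves the approximation ratio exactly: an $r$-approximate $k$-component set $S$ of $H$ satisfies $k\alpha(G)/|S|\leq r$, and the recovered independent set $I$ satisfies $\alpha(G)/|I|\leq k\alpha(G)/|S|\leq r$. So a polynomial-time $N^{1-\epsilon}$-approximation for \maxkcomp{k} on $N$-vertex graphs would, applied to $H$ with $N=kn$, give an independent set of $G$ within $(kn)^{1-\epsilon}=k^{1-\epsilon}n^{1-\epsilon}\leq k\,n^{1-\epsilon}$ of $\alpha(G)$. Because $k$ is a constant, $k\,n^{1-\epsilon}\leq n^{1-\epsilon/2}$ for all $n$ beyond a constant threshold, and for $n$ below that threshold \textsc{Max-IS} is solvable optimally by brute force; so \textsc{Max-IS} would be $n^{1-\epsilon/2}$-approximable in polynomial time, contradicting Theorem~\ref{thmHastad} unless $\PP=\NP$. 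As $\epsilon>0$ was arbitrary, this proves the statement.

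I expect the only genuinely delicate part to be the upper-bound half of the identity ${\sf comp}_k(H)=k\,\alpha(G)$: one must rule out that carving partial cliques from several pairwise-adjacent blobs $Q_v$ beats the ``whole cliques over an independent set'' strategy. Making the interconnections between adjacent cliques \emph{complete} bipartite (rather than sparser, as in Lemma~\ref{lem:k-2k-reduction}) is what forces each component of $H[S]$ to be supported on a connected set of $G$, after which the count $c(G[W])\leq\alpha(G)$ and the bookkeeping $|S|\leq k\cdot c(G[W])$ are routine. The remaining subtlety — that a constant size-amplification factor $k$ is harmless against an $n^{1-\epsilon}$-type bound — is handled by the threshold argument above.
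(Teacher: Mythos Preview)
Your proof is correct, but it takes a different route from the paper's. The paper does not give a direct reduction from \textsc{Max-IS} to \maxkcomp{k} for general $k$; instead it leverages the already-established hardness for powers of two (Corollary~\ref{clbBcis}) and bridges the gap by a vertex-dropping argument \emph{on the same graph}: if $2^i<k<2^{i+1}$ and $S_k$ is a $\beta$-approximate $k$-component set, one trims each component of $G[S_k]$ down to at most $2^i$ vertices, losing at most a constant factor $\tfrac{2^{i+1}-k}{2^i}$, and then invokes the $n^{1-\epsilon}$ lower bound for \maxkcomp{2^i}. Your approach instead replaces the iterated ``doubling'' reduction of Lemma~\ref{lem:k-2k-reduction} by a single $k$-clique blowup with complete bipartite joins, yielding the clean identity ${\sf comp}_k(H)=k\,\alpha(G)$ and an exactly ratio-preserving reduction in one step. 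What this buys you is a self-contained argument that makes Corollary~\ref{cor3.3}, Corollary~\ref{clbBcis}, and Theorem~\ref{lbBcis} redundant as separate statements; what the paper's approach buys is that it reuses Lemma~\ref{lem:k-2k-reduction} verbatim and needs no new graph construction for the non-power-of-two case.
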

\begin{proof}
Now consider a positive integer $k$ such that $2^i < k < 2^{i+1}$ for some non-negative integer $i$. Suppose that \maxkcomp{k} can be approximated within a factor of $\beta$, to obtain a subset $S_k$ such that $|S_k| \geq \frac1\beta|S^*_k|$. We can use this subset $S_k$ to produce a solution for \maxkcomp{2^i}, by considering each component in $S_k$ and dropping vertices arbitrarily from each component until its size is at most $2^i$. Notice that the number of components in $S_k$ with size more than $2^i$ can be at most $\frac{|S_k|}{2^i}$, and the number of vertices we drop from each such component is at most $k - 2^i$. Thus the size of the constructed solution to \maxkcomp{2^i}, $S_{2^i}$ is at least $|S_k| - \frac{|S_k|}{2^i}(k - 2^i) = |S_k|\frac{2^{i+1}-k}{2^i}$. We also know that $|S^*_{2^i}| \leq |S^*_k|$, and thus the approximation ratio of this constructed set is 
\[\frac{|S_{2^i}|}{|S^*_{2^i}|} \geq |S_k|\frac{2^{i+1} -k}{2^i}\frac1{|S^*_k|} \geq \frac{2^{i+1} - k}{2^i}\frac1\beta\]
Now the hardness result of \maxkcomp{2^i} implies that $\beta \in \omega(n^{1-\varepsilon})$, and thus \maxkcomp{k} cannot be approximated within a factor of $n^{1-\varepsilon}$ for any constant $k$, unless {\sf P} = {\sf NP}.
\end{proof}

The authors in \cite{orlovich2011complexity} have proved that, for any $\epsilon > 0$,  \disso{} is hard to approximate within a factor of $n^{\frac12 - \epsilon}$,  unless $\PP = \NP$. When $i=1$, \maxkcomp{2^i} is same as \disso{} and we have the following result which is a stronger negative result for \disso{}.

\begin{theorem} \label{disso}
For any $\epsilon >0$, \disso{} is hard to approximate within a factor of $n^{1 - \epsilon}$,  unless $\PP = \NP$. 
\end{theorem}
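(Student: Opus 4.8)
The plan is to observe that \disso{} is \emph{literally} \maxkcomp{2}, and $2 = 2^1$, so the statement is nothing more than the case $i=1$ of Corollary~\ref{clbBcis}. Thus a one-line argument suffices: apply Corollary~\ref{clbBcis} with $i=1$. For the reader's benefit, however, I would also unwind the chain of reductions explicitly, since $\disso$ is the headline special case and deserves a self-contained derivation.

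Here is how the unwound argument would go. By Corollary~\ref{cor3.3} (taken with $k=2$, i.e.\ a single application of the reduction in Lemma~\ref{lem:k-2k-reduction}), \textsc{Max-IS} is \textsf{AP}-reducible to \maxkcomp{2} with size amplification $2n$: an $n$-vertex instance $G$ of \textsc{Max-IS} is mapped to a $2n$-vertex instance $G'$ of \disso, and the reduction is ratio preserving, meaning that from any dissociation set $S'$ in $G'$ one recovers in polynomial time an independent set $S$ in $G$ with $\alpha(G)/|S| \le |S'^{*}|/|S'|$, where $S'^{*}$ is a maximum dissociation set of $G'$. Suppose, for contradiction, that \disso{} admits a polynomial-time $N^{1-\epsilon}$-approximation, where $N$ is the number of vertices of the \disso{} instance. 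Running it on $G'$ (so $N = 2n$) yields $S'$ with $|S'| \ge (2n)^{-(1-\epsilon)}\,|S'^{*}|$, and hence $|S| \ge (2n)^{-(1-\epsilon)}\,\alpha(G)$. Since $(2n)^{-(1-\epsilon)} = 2^{-(1-\epsilon)} n^{-(1-\epsilon)} \ge \tfrac12\, n^{-(1-\epsilon)}$, and for all sufficiently large $n$ we have $\tfrac12 \ge n^{-\epsilon/2}$, it follows that $|S| \ge n^{-(1-\epsilon/2)}\alpha(G)$; that is, \textsc{Max-IS} would be approximable within $n^{1-\epsilon/2}$ in polynomial time, contradicting Theorem~\ref{thmHastad} unless $\PP = \NP$. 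As $\epsilon>0$ was arbitrary, this proves the claim.

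The one place requiring any care at all is the bookkeeping of the constant-factor blow-up in instance size: the reduction doubles the vertex count, so an $N^{1-\epsilon}$ guarantee on the \disso{} side only translates to roughly $\tfrac12 n^{-(1-\epsilon)}$ on the \textsc{Max-IS} side, and one must absorb the stray factor $2$ by passing from $\epsilon$ to $\epsilon/2$ (valid for large $n$, which is all that matters for an asymptotic inapproximability statement). Everything else is a direct specialization of machinery already in hand — no new combinatorial construction is needed — which is exactly why I would state it as an immediate corollary of Corollary~\ref{clbBcis} and relegate the size-constant computation to a single sentence. The point worth emphasizing in the surrounding text is the improvement over the prior $n^{1/2-\epsilon}$ bound of \cite{orlovich2011complexity}, which the full-strength $n^{1-\epsilon}$ inapproximability supersedes.
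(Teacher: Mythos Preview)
Your proposal is correct and matches the paper's approach exactly: the paper states Theorem~\ref{disso} without a separate proof, simply remarking in the preceding sentence that \disso{} is \maxkcomp{2^1}, so the result is the case $i=1$ of Corollary~\ref{clbBcis}. Your explicit unwinding of the size-amplification bookkeeping is a harmless elaboration of what the paper leaves implicit.
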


Next, we consider the inapproximability of \maxkcomp{k} when $k$ is not a constant and our result is as follows.

\begin{theorem} \label{lower_BcisK}
If $k$ is not a constant, then for any $\epsilon > 0$, \maxkcomp{k} cannot be approximated in polynomial time with a better factor than $n^{\frac{\log (n/k)}{\log (n)} - \epsilon}$,  unless $\PP = \NP$.
\end{theorem}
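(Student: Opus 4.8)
The plan is to reduce \textsc{Max-IS} to \maxkcomp{k} by an (approximation-)ratio preserving reduction whose size amplification is a factor of exactly $k$, and then to read the claimed bound off the elementary identity $n^{\frac{\log(n/k)}{\log n}} = n/k$ together with Theorem~\ref{thmHastad}.

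First I would fix the reduction. Given an instance $H$ of \textsc{Max-IS} on $\nu$ vertices, build $G$ by replacing each vertex $v$ of $H$ with a clique $C_v$ on $k$ vertices and, for each edge $uv$ of $H$, adding all $k^2$ edges between $C_u$ and $C_v$; then $n:=|V(G)|=k\nu$ and $G$ is built in time polynomial in its own size. (When $k$ is a power of two one could instead invoke Corollary~\ref{cor3.3}, whose composed reduction already has size amplification $kn$; the clique blow-up is preferable here because it treats all $k$ uniformly.) The core claim is that ${\sf comp}_k(G)=k\cdot\alpha(H)$, in the strong form that from any $k$-component set $S$ of $G$ one computes in polynomial time an independent set $I$ of $H$ with $|I|\ge |S|/k$. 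The inequality ${\sf comp}_k(G)\ge k\alpha(H)$ is immediate: for a maximum independent set $I^{*}$ of $H$, the union of the cliques $C_v$ with $v\in I^{*}$ has no edge between distinct cliques, hence is a $k$-component set of size $k\alpha(H)$. For the extraction, let $T_1,\dots,T_t$ be the components of $G[S]$; since each $C_v$ induces a clique, $S\cap C_v$ lies in a single $T_\ell$, so the supports $W_\ell=\{v: C_v\cap T_\ell\neq\emptyset\}$ partition $\{v: C_v\cap S\neq\emptyset\}$. If some $H$-edge joined $W_\ell$ to $W_{\ell'}$ with $\ell\neq\ell'$, the corresponding complete join in $G$ would merge $T_\ell$ and $T_{\ell'}$, a contradiction; therefore picking one vertex from each $W_\ell$ yields an independent set $I$ of $H$ with $|I|=t$, and from $|S|=\sum_\ell|T_\ell|\le tk$ we get $|I|=t\ge |S|/k$. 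Taking $|S|$ maximum gives $\alpha(H)\ge{\sf comp}_k(G)/k$, so the reduction is ratio preserving.

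Now suppose, for contradiction, that for some $\epsilon>0$ there is a polynomial-time algorithm approximating \maxkcomp{k} within a factor $n^{\frac{\log(n/k)}{\log n}-\epsilon}$. Since $n^{\frac{\log(n/k)}{\log n}}=n/k$ exactly, this factor equals $(n/k)\,n^{-\epsilon}$. Running the algorithm on $G$ and applying the extraction step produces an independent set of $H$ of size at least
\[
\frac{{\sf comp}_k(G)/k}{(n/k)\,n^{-\epsilon}} \;=\; \frac{\alpha(H)}{\nu\,(k\nu)^{-\epsilon}} \;=\; \alpha(H)\,\frac{(k\nu)^{\epsilon}}{\nu} \;\ge\; \frac{\alpha(H)}{\nu^{\,1-\epsilon}},
\]
using $n=k\nu$ and $k\ge 1$. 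This is a polynomial-time $\nu^{1-\epsilon}$-approximation for \textsc{Max-IS} on $\nu$ vertices, which by Theorem~\ref{thmHastad} forces $\PP=\NP$.

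The only delicate point — and the step I expect to be the real obstacle — is the exact identity ${\sf comp}_k(G)=k\alpha(H)$: one must rule out that partially selected cliques allow a $k$-component set to beat $k$ disjoint copies of a maximum independent set, and the ``one representative per component'' argument above is exactly what closes this. Apart from that, the proof is bookkeeping, with the mild caveat that $k$ should be polynomially bounded in $\nu$ (equivalently, the running time should be measured against the output size) for the blow-up to be a genuine polynomial-time reduction.
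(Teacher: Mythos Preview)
Your proof is correct and reaches the same conclusion, but the reduction you use is genuinely different from the paper's. The paper does not introduce a new construction here; it simply invokes the iterated doubling reduction of Lemma~\ref{lem:k-2k-reduction} (via Corollary~\ref{cor3.3}), which produces an instance of size $n'=nk$ from a \textsc{Max-IS} instance on $n$ vertices, and then carries out the same algebra you do---though the paper writes the manipulation of $(nk)^{\frac{\log n}{\log(nk)}-\epsilon}$ out at length rather than immediately using the identity $n^{\log(n/k)/\log n}=n/k$ as you do. Your clique blow-up is a cleaner alternative: it achieves the exact equality ${\sf comp}_k(G)=k\alpha(H)$ in one step, works uniformly for all $k$ rather than only for powers of two, and makes the ratio-preservation argument self-contained. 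The paper's route has the advantage of reusing machinery already set up for the constant-$k$ lower bound, so no new extraction lemma is needed. Both approaches share the same mild caveat you flag, namely that one must be able to solve $n=k(n)\cdot\nu$ for $n$ given $\nu$ so that the reduction is genuinely polynomial-time; the paper leaves this implicit.
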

\begin{proof}
Note that in the previous theorem, the graph $G'$ constructed in the reduction of \maxkcomp{k} from \textsc{Max-IS} has size $n'= n2^{\log k} = nk$, where $n$ is the number of vertices in an instance $G$ of \textsc{Max-IS}.

Now suppose that \maxkcomp{k} can be approximated within a factor of ${n'}^{\frac{\log (n'/k)}{\log (n')} -  \epsilon}$ in polynomial time, then 
\begin{align*}
\frac{\alpha(G)}{|S|} \leq \frac{comp_k(G')}{|S'|} &\leq 
(nk)^{\frac{\log (n)}{\log (nk)} - \epsilon} \\
&= \frac{n}{(nk)^{\epsilon}}\cdot n^{-\frac{\log k}{\log nk}}k^{\frac{\log n}{\log nk}}  \\
&= \frac{n}{(nk)^{\epsilon}}\cdot \left({n^{{-\log k}}n^{{\log k}}}\right)^{\frac{1}{\log nk}} \\
&=\frac{n}{(n)^{\epsilon'}} = n^{1-\epsilon'} < n^{1-\epsilon},
\end{align*}
where $\epsilon' = \epsilon\left[1 + \frac{\log k}{\log n}\right]$.  The last inequality holds as $\epsilon' > \epsilon$. This  contradicts the lower bound for \textsc{Max-IS} available in Theorem \ref{thmHastad}. 
\end{proof}

From Theorem \ref{lower_BcisK}, we have the following corollary,

\begin{corollary}
 If $k = O(\frac{n}{\log n})$, then \maxkcomp{k} is hard to approximate within a factor of $O(\log n)$,  unless $\PP = \NP$.
\end{corollary}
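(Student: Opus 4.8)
The plan is to derive the corollary as an immediate specialization of Theorem~\ref{lower_BcisK}, taking $k$ as large as the hypothesis $k=O(n/\log n)$ allows, i.e.\ essentially $k=\Theta(n/\log n)$, where $n$ is the size of the \maxkcomp{k} instance. The first step is to put the inapproximability exponent $\frac{\log(n/k)}{\log n}$ of Theorem~\ref{lower_BcisK} into closed form. Taking logarithms (base $2$) one gets $\log\!\left(n^{\frac{\log(n/k)}{\log n}}\right)=\frac{\log(n/k)}{\log n}\cdot\log n=\log(n/k)$, so
\[ n^{\frac{\log(n/k)}{\log n}} \;=\; \frac{n}{k}. \]
Hence Theorem~\ref{lower_BcisK} is just the statement that, for every $\epsilon>0$, \maxkcomp{k} cannot be approximated within $\frac{n/k}{\,n^{\epsilon}\,}$ unless $\PP=\NP$; up to the arbitrarily small polynomial slack $n^{-\epsilon}$, the inapproximability factor is exactly $n/k$.

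The second step is to substitute the bound on $k$. If $k\le c\,n/\log n$ for a constant $c$, then $n/k\ge \frac1c\log n$, so $\log(n/k)\ge \log\log n-\log c$ and therefore
\[ \frac{\log(n/k)}{\log n}\;\ge\;\frac{\log\log n-\log c}{\log n}. \]
Plugging this into the exponent, and using $n^{\frac{\log\log n}{\log n}}=2^{\log\log n}=\log n$ together with $n^{-\frac{\log c}{\log n}}=1/c$, the inapproximability factor guaranteed by Theorem~\ref{lower_BcisK} is at least $\frac1c\,(\log n)\cdot n^{-\epsilon}$. Since $\epsilon>0$ is an arbitrary constant, this is read (in the same asymptotic sense as Theorem~\ref{lower_BcisK}) as ruling out approximation ratios of order $\log n$: concretely, no polynomial-time algorithm can approximate \maxkcomp{k} within $(\log n)^{1-\delta}$ for any fixed $\delta>0$, unless $\PP=\NP$, which is the content of the corollary.

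I do not expect a real obstacle here, since everything reduces to the closed-form identity above plus the one substitution. The single point that needs (routine) care is bookkeeping of which quantity plays the role of ``$n$'': it is the size of the \emph{constructed} \maxkcomp{k} instance that appears both in the statement of Theorem~\ref{lower_BcisK} and in the corollary, not the size of the underlying \textsc{Max-IS} instance, and the $n^{-\epsilon}$ loss is absorbed in exactly the way it is absorbed in the proof of Theorem~\ref{lower_BcisK}, namely by redefining $\epsilon$.
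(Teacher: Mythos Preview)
Your approach---specializing Theorem~\ref{lower_BcisK} with $k=O(n/\log n)$---is exactly the route the paper takes (the corollary is stated as an immediate consequence of that theorem with no further argument), and your closed-form identity $n^{\frac{\log(n/k)}{\log n}}=n/k$ is correct and clarifying.

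The gap is in your final step, where you absorb the $n^{-\epsilon}$ loss ``by redefining $\epsilon$.'' After substitution your inapproximability bound is $\tfrac{1}{c}(\log n)\,n^{-\epsilon}$ for every fixed constant $\epsilon>0$. But for any such $\epsilon$ one has $(\log n)\,n^{-\epsilon}\to 0$; equivalently, the exponent $\tfrac{\log\log n-\log c}{\log n}-\epsilon$ is eventually negative. So Theorem~\ref{lower_BcisK} \emph{as stated} yields no nontrivial lower bound in the regime $k=\Theta(n/\log n)$, and certainly not $(\log n)^{1-\delta}$. The absorption trick in the proof of Theorem~\ref{lower_BcisK} works only because there both the target exponent and the loss $\epsilon$ are constants; here the target exponent $\tfrac{\log\log n}{\log n}$ is $o(1)$, so a constant $\epsilon$ swamps it completely.

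To actually obtain a $\log n$-type bound you would have to bypass the theorem and return to the underlying reduction, tracking the $\epsilon$ in terms of the \textsc{Max-IS} source size. But note that when $k=\Theta(n/\log n)$ the source \textsc{Max-IS} instance has only $\Theta(\log n)$ vertices, so brute force solves it in time polynomial in $n$ and the $n^{1-\epsilon}$ hardness of Theorem~\ref{thmHastad} is vacuous there. The paper offers no proof of this corollary beyond citing Theorem~\ref{lower_BcisK}, so this difficulty is inherited from the paper's own treatment rather than introduced by you; nonetheless, the precise claim you make---hardness within $(\log n)^{1-\delta}$---does not follow from the argument given.
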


\section{Approximation algorithms for \maxwkcomp{k}}

We now turn to the weighted version of the problem, \maxwkcomp{k}. The input, in addition to a graph $G=(V,E)$, consists of a weight function on the vertices of the graph, $w:V\to \mathbb{R}^+$. Given a subset of vertices $A \subseteq V$, the weight of the subset is defined as $w(A) = \sum_{v \in A} w(v)$. In \maxwkcomp{k}, the objective is to find a subset of vertices $S \subseteq V$ of maximum weight $w(S)$ such that the size of the largest component in $G[S]$ is at most $k$. Since the unweighted case looked at thus far is a special case of the weighted case (with the weight function being uniform), the lower bounds derived in the previous section hold for the weighted version as well. 

In this section, we look at an approximation algorithm for this problem and prove an approximation guarantee of $\Delta$, the maximum degree of the input graph. 

\begin{algorithm2e}[h] 
\SetAlgoLined
\KwIn{An undirected simple graph $G=(V, E)$ and $w:V \rightarrow \mathbb{R}^+$}
\KwOut{$S \subseteq V$ such that each component in $G[S]$ has at most $k$ vertices}
  \If{$G = \emptyset$}{Return $\emptyset$\; \nllabel{basecase}} 
  \eIf{ there exists a vertex $u$ such that $w(u) \leq 0$ }{
   Return $k$-Weighted-OBCS($G \setminus \{u\}$,$w$)\;
   }{
       Let $(s,t)$ be the edge in $G$ that maximizes $w(s) + w(t)$\;
   Define the weight functions \quad
   $w_1(u) = \left\{\begin{array}{ll} w(u), & u\in N[\{s,t\}]\\ 0, & o.w.\end{array}\right.$, and $w_2 = w - w_1$\;
   $S' \gets$ $k$-Weighted-OBCS$(G, w_2)$\; \nllabel{inductive}
   Return $S' \cup \{s,t\}$\;
   }
 \caption{$k$-Weighted-OBCS($G$,$w$) }\label{algo4.1}
\end{algorithm2e}

The algorithm is described in Algorithm \ref{algo4.1} and the analysis makes use of the local ratio theorem.
\begin{theorem} {\rm \cite{bar2004local}}
 Let $(G=(V, E), w)$ be an instance of \maxwkcomp{k} having $n$ vertices. Let $w_1, w_2 \in \mathbb{R}^n$ such that $w = w_1 + w_2$. Let $S \subseteq V$ be an $r$-approximate feasible solution of \maxwkcomp{k} for $G$ with respect to $w_1$ and $w_2$. Then $S$ is an $r$-approximate solution of \maxwkcomp{k} for  $G$ with respect to $w$.
\end{theorem}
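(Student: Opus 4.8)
The plan is to argue directly, comparing an optimal solution for the combined weight function $w$ against the approximation guarantees available for $w_1$ and $w_2$ separately. The observation that makes such a decomposition legitimate is that feasibility for \maxwkcomp{k} is a property of the pair $(G,S)$ alone --- whether every component of $G[S]$ has at most $k$ vertices --- and hence does not depend on which of $w,w_1,w_2$ we are optimizing. Consequently the three instances $(G,w)$, $(G,w_1)$, $(G,w_2)$ share a single common family of feasible vertex sets; in particular $S$ is feasible for all three, and any optimum $S^{*}$ of $(G,w)$ is a feasible (though not necessarily optimal) candidate for $(G,w_1)$ and for $(G,w_2)$.

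First I would fix an optimal solution $S^{*}$ of $(G,w)$, so that $w(S^{*})=\OPT_w$, where $\OPT_{w'}$ denotes the maximum weight of a feasible $k$-component set of $G$ under weight function $w'$. Using $w=w_1+w_2$ together with the feasibility of $S^{*}$ for the other two instances, one gets $\OPT_w = w_1(S^{*})+w_2(S^{*}) \le \OPT_{w_1}+\OPT_{w_2}$. Next I would invoke the hypothesis that $S$ is $r$-approximate with respect to each of $w_1$ and $w_2$: since $r>0$, the inequalities $w_i(S)\ge \tfrac1r\,\OPT_{w_i}$ rearrange to $\OPT_{w_i}\le r\,w_i(S)$ for $i=1,2$, which hold regardless of the signs of the $w_i$. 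Combining these, $\OPT_w \le r\,(w_1(S)+w_2(S)) = r\,w(S)$, i.e.\ $w(S)\ge \tfrac1r\,\OPT_w$, which is exactly the assertion that $S$ is an $r$-approximate solution of $(G,w)$.

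There is essentially no technical obstacle; the only points requiring a little care are bookkeeping ones: (i) making explicit that $\OPT_{w_1}$, $\OPT_{w_2}$ and $\OPT_w$ are all maxima over the \emph{same} feasibility class, so that $w_i(S^{*})\le \OPT_{w_i}$ is justified; and (ii) noting that multiplying the approximation guarantee through by $r$ remains valid even when $w_1$ or $w_2=w-w_1$ is negative on some vertices --- which is precisely the regime the algorithm produces when it splits $w$. Both become immediate once stated, so a short displayed chain of inequalities realizing the above suffices, and I would present the proof in that form.
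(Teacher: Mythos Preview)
Your argument is correct and is exactly the standard proof of the Local Ratio Theorem: bound $\OPT_w = w_1(S^*)+w_2(S^*) \le \OPT_{w_1}+\OPT_{w_2} \le r\,w_1(S)+r\,w_2(S) = r\,w(S)$, using that feasibility is weight-independent so all three instances share the same feasible sets. The paper itself does not prove this theorem at all; it simply quotes it from \cite{bar2004local} and then applies it, so there is nothing to compare against beyond noting that your write-up supplies the (routine) justification the paper omits.
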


By using this theorem we have the following result.
\begin{theorem}
 \maxwkcomp{k} can be approximated within a factor of $\Delta$, where $\Delta$ is the maximum degree of the input graph $G$.
\end{theorem}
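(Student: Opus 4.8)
The plan is to prove, by induction on the potential $\phi(G,w) = |V(G)| + |\{v \in V(G) : w(v) > 0\}|$, that Algorithm~\ref{algo4.1} on input $(G,w)$ returns a feasible $k$-component set whose weight is at least $\frac{1}{\Delta}\,\OPT$ (we take $k\ge 2$ throughout; for $k=1$ the problem is \textsc{Max-IS} and this particular algorithm, which returns both endpoints of an edge, is not applicable). The potential strictly decreases along every recursive call: deleting a non-positive vertex drops $|V|$ by one while not increasing the number of positively weighted vertices, and in the main branch $N[\{s,t\}]$ contains at least the two positively weighted vertices $s,t$, whose $w_2$-weight becomes $0$. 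The base case $G=\emptyset$ is trivial since $\OPT = 0$. (If $G$ is nonempty with all weights positive but edgeless, the algorithm should return $V$, which is optimal as every component is a singleton.)

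For the vertex-deletion branch, observe that if $w(u)\le 0$ then deleting $u$ from any feasible solution neither decreases its weight nor destroys feasibility (components only shrink), so $\OPT(G,w)\le\OPT(G\setminus\{u\},w)$; since also $\Delta_{G\setminus\{u\}}\le\Delta_G$, the induction hypothesis applied to the recursive call yields a feasible $\Delta_G$-approximation for $(G,w)$.

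The core of the argument is the main branch, handled via the local ratio theorem with the algorithm's decomposition $w = w_1 + w_2$. The first step is a structural observation: the recursive call on $(G,w_2)$ returns a set $S'$ with $S'\cap N[\{s,t\}]=\emptyset$, because every vertex of $N[\{s,t\}]$ has $w_2$-weight $0$ and every other vertex has $w_2$-weight $>0$, and the deletion branch never alters weights, so the algorithm simply deletes all of $N[\{s,t\}]$ before touching anything else. Consequently $S = S'\cup\{s,t\}$ is a feasible $k$-component set: in $G[S]$ the pair $\{s,t\}$ forms an isolated edge (a component of size $2\le k$) and the remaining components are those of $G[S']$. Next I would verify that $S$ is a $\Delta$-approximation with respect to both $w_1$ and $w_2$. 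For $w_2$, since $s,t\notin S'$ and $w_2$ vanishes on $\{s,t\}$, we get $w_2(S) = w_2(S')\ge\frac{1}{\Delta}\OPT(G,w_2)$ by the induction hypothesis. For $w_1$, which is supported on $N[\{s,t\}]$, any feasible $T$ satisfies $w_1(T)\le\sum_{u\in N[\{s,t\}]}w(u)$, while $w_1(S)\ge w(s)+w(t)$. The local ratio theorem then upgrades these two guarantees to a $\Delta$-approximation with respect to $w$, completing the induction.

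The step I expect to be the main obstacle is the inequality $\sum_{u\in N[\{s,t\}]}w(u)\le\Delta\,(w(s)+w(t))$, which is precisely where the choice of $(s,t)$ as the edge maximizing $w(s)+w(t)$ enters. Any neighbor $u$ of $s$ other than $t$ gives an edge $(u,s)$, hence $w(u)+w(s)\le w(s)+w(t)$, i.e.\ $w(u)\le w(t)$; symmetrically a neighbor of $t$ other than $s$ has weight at most $w(s)$. Writing $N[\{s,t\}]$ as the disjoint union of $\{s,t\}$, the $d(s)-1$ neighbors of $s$ other than $t$, and the at most $d(t)-1$ remaining neighbors of $t$, summation gives $\sum_{u\in N[\{s,t\}]}w(u)\le w(s)+w(t)+(d(s)-1)w(t)+(d(t)-1)w(s) = d(t)\,w(s)+d(s)\,w(t)\le\Delta\,(w(s)+w(t))$. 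Therefore $\OPT(G,w_1)\le\Delta\,(w(s)+w(t))\le\Delta\,w_1(S)$, which is the remaining guarantee needed above.
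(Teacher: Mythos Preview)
Your proof is correct and follows essentially the same local-ratio approach as the paper: the same decomposition $w=w_1+w_2$ supported on $N[\{s,t\}]$, the same use of the maximality of $w(s)+w(t)$ to bound neighbor weights, and the same final inequality $\sum_{u\in N[\{s,t\}]}w(u)\le \Delta\,(w(s)+w(t))$. Your version is more rigorous than the paper's in several places the paper leaves implicit: you supply a termination measure, you explicitly treat the non-positive-weight deletion branch, you justify feasibility of $S'\cup\{s,t\}$ by arguing $S'\cap N[\{s,t\}]=\emptyset$, and you flag the edgeless case and the restriction $k\ge 2$---all of which are needed for a complete argument but are glossed over in the paper.
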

\begin{proof}
We prove the correctness by induction. In the base case, Line \ref{basecase} is executed, and it is $\Delta$-approximate. In the inductive step, when Line \ref{inductive} is executed, $S'$ is $\Delta$-approximate with respect to $w_2$.
Now, the only vertices with a non zero weight in $w_1$ are the vertices $s$, $t$, and their neighbors. We will now prove that the weight $w_1(s) + w_1(t)$ is at least a $1/\Delta$ fraction of the total weight of $N[\{s,t\}]$ (with respect to $w_1$), which will prove the theorem. First note that since $(s,t)$ was the edge with the maximum combined weight, we have that for any neighbour $u$ of $s$, $w_1(u) = w(u) \leq w(t) = w_1(t)$, and likewise for any neighbour $v$ of $t$, we have $w_1(v) = w(v) \leq w(s) = w_1(s)$. Now we have,
\[\frac{w_1(\{s,t\})}{w_1(N[\{s,t\}])} \geq \frac{w_1(s) + w_1(t)}{w_1(s)d_G(s) + w_1(t)d_G(t)}\geq \frac1\Delta\]
where the first inequality follows from our observation above, and the second inequality holds since the degree of each vertex is at most $\Delta$.

Thus using the local ratio theorem, we get a $\Delta$-factor approximation algorithm for \maxwkcomp{k}.
\end{proof}

It should be noted that due to the relation between the weighted and unweighted cases, the upper bound derived here holds even for the unweighted case.

\section{Approximation algorithms for \disso{}}
A subset of vertices $S \subseteq V$ in a graph $G=(V, E)$ is called a dissociation set if it induces a subgraph with vertex degree of at most 1. In Theorem \ref{disso} we show that \disso{} is hard to approximate within a factor of $n^{1 - \epsilon}$, unless $\PP = \NP$. In this section, we prove that a greedy algorithm approximates \disso{} within a factor of $3 \overline{d} + 2$.

In this algorithm, we keep track of the components of size 1 and 2 in $G[S]$. $S_1$ is the set of vertices in $S$ which form components of size 1 and $S_2$ is the set of vertices which form components of size 2 in $G[S]$. Each time we include a vertex into $S$, we update the set $X$ such that each component in $G[S\cup X]$ has at least 3 
vertices.

We first prove the correctness of the Algorithm \ref{Disscociation-Algo} and then turn to its approximation guarantee.
Assume that the algorithm runs for $q$ iterations. Since we include one vertex in each iteration, it is clear that the size of the solution set is $q$. Let $v_i$ be the vertex included in the $i$th iteration and so the solution set returned is $S = \{v_1,\ldots,v_q\}$.

\begin{lemma} \label{lemma:S2inX}
At any iteration of the algorithm, for each vertex $v \in S_2$, $N_{V'}(v) \subseteq X$.
\end{lemma}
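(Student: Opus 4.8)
The plan is to prove the statement as an invariant maintained by Algorithm~\ref{Disscociation-Algo}, by induction on the number of iterations. Before the first iteration we have $S=S_1=S_2=\emptyset$, so the claim holds vacuously, and it suffices to check that a single iteration preserves it. The two facts I would lean on throughout are that $X$ only grows from one iteration to the next while the ``outside'' set $V'$ only shrinks, and that $v_i$ (the vertex inserted into $S$ in iteration $i$) is picked from among the vertices lying in neither $S$ nor $X$. Writing $X_{\mathrm{old}},V'_{\mathrm{old}}$ and $X_{\mathrm{new}},V'_{\mathrm{new}}$ for the values before and after iteration $i$, these facts immediately give $N_{V'_{\mathrm{new}}}(v)\subseteq N_{V'_{\mathrm{old}}}(v)\subseteq X_{\mathrm{old}}\subseteq X_{\mathrm{new}}$ for every vertex $v$ that lies in $S_2$ both before and after the iteration, so such ``old'' members of $S_2$ are never a problem.

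Hence the only real work is to understand how $S_2$ changes during iteration $i$. Since $G[S\cup\{v_i\}]$ is again a dissociation set, $v_i$ has at most one neighbour in $S$. If it has none, then $v_i$ forms a new size-$1$ component, $S_2$ is unchanged, and we are done by the previous paragraph. If $v_i$ has exactly one neighbour $u\in S$, then $u$ cannot already lie in $S_2$ (that would give $u$ degree $2$ in $G[S\cup\{v_i\}]$), so $u\in S_1$; thus $\{u,v_i\}$ becomes a new size-$2$ component and the only vertices entering $S_2$ in this iteration are $u$ and $v_i$.

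For these two new members I would invoke the update step of the algorithm: when $v_i$ is inserted and a size-$2$ component is created, the algorithm moves into $X$ every vertex of $V'$ adjacent to the component of $v_i$ in $G[S\cup X]$ (this is precisely what enforces the ``every component of $G[S\cup X]$ has at least $3$ vertices'' property). In particular all of $N_{V'}(u)\cup N_{V'}(v_i)$ is placed into $X_{\mathrm{new}}$, so $N_{V'_{\mathrm{new}}}(u)\subseteq X_{\mathrm{new}}$ and $N_{V'_{\mathrm{new}}}(v_i)\subseteq X_{\mathrm{new}}$, which closes the inductive step.

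The graph-theoretic content is light --- the monotonicity of $X$ and $V'$, together with the observation that a newly created size-$2$ component consists of exactly one former singleton and $v_i$, does most of the work. The step I expect to need the most care is pinning down the update rule precisely: one must confirm that whatever vertices the algorithm moves into $X$ when a size-$2$ component is created include \emph{all} of the remaining neighbours of both endpoints of that component (not merely enough of them to reach three vertices). Once that is nailed down, everything else is routine bookkeeping.
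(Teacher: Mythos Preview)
Your approach is essentially the paper's: induction on the iteration number, trivial base case, then split according to whether the newly chosen vertex $v_i$ has a neighbour in $S_1$; in the ``no'' case $S_2$ is unchanged, and in the ``yes'' case the explicit update $X \leftarrow X \cup N_{V'}(p) \cup N_{V'}(v_i)$ (line~\ref{X} of the algorithm) takes care of the two new members of $S_2$. Your monotonicity argument for the old $S_2$-members is a welcome bit of extra care that the paper leaves implicit.

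One small remark: the sentence ``Since $G[S\cup\{v_i\}]$ is again a dissociation set, $v_i$ has at most one neighbour in $S$'' is forward-referencing --- correctness of $S$ as a dissociation set is precisely what this lemma is meant to feed into --- and in any case it is not needed for the lemma. The algorithm's \textbf{if}/\textbf{else} branching depends only on whether \emph{some} $p\in S_1$ is adjacent to $v_i$, and regardless of how many such $p$ exist, exactly one pair $\{p,v_i\}$ is moved into $S_2$ and line~\ref{X} then pushes all of $N_{V'}(p)\cup N_{V'}(v_i)$ into $X$. So you can simply drop that clause and argue directly from the two branches of the algorithm, as the paper does.
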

\begin{proof}
We prove this statement by induction on the iteration number. For the base case $S_2 = \emptyset$ and the statement trivially holds. Assume that the statement is true for the $i$th iteration. Let $v_{i+1}$ be the vertex added to $S$ in $(i+1)$th iteration. If $v_{i+1}$ has no neighbor in $S_1$ then $S_2$ remains unchanged and the statement holds. Otherwise, a new pair of vertices is added to $S_2$ and line \ref{X} ensures that the neighbors of these two vertices are included in $X$. Thus the lemma holds in this case as well.
\end{proof}

\begin{algorithm2e}[H] 
\SetAlgoLined
\KwIn{An undirected simple graph $G=(V, E)$}
\KwOut{$S \subseteq V$ such that each component in $G[S]$ has at most 2 vertices}
 $S_1 = \emptyset$; $S_2= \emptyset$; $X = \emptyset$; $S= \emptyset$; $V' = V$; $i = 1$;\\
 \While{$[V' \neq \emptyset]$}{
 choose a vertex $v \in V'$ of minimum degree in $G[V']$\;
 $d_i = d_{G[V']}(v)$ \;
 $S = S \cup \{v\}$\;
 \eIf{there exists a vertex $p\in S_1$ such that $(p, v) \in E$}{ 
 $S_2 = S_2 \cup \{v, p\}$\; \label{S1toS2}
 $S_1 = S_1 \setminus \{p\}$\;
 $X = X \cup N_{V'}(p) \cup N_{V'}(v)$\; \label{X}
 $V' = V\setminus (S \cup X)$\;
 }{
   $S_1 = S_1 \cup \{v\}$\; 
   $X = X \cup [\bigcup_{t \in S_1}[N_{V'}(v) \cap N_{V'}(t)]]$\;
   $V' = V\setminus (S \cup X)$\;
   }
 $i = i+1$\;
 }
  Return $S$\;
 \caption{\disso{}}
 \label{Disscociation-Algo}
\end{algorithm2e}

The correctness of the algorithm can be observed easily using the above lemma. Consider a solution set $S = S_1 \dot{\cup} S_2$ returned by this algorithm. From Lemma \ref{lemma:S2inX}, we can see that each of the vertices in $S_2$ are in a component of size exactly 2. Now consider a vertex $v \in S_1$. It cannot be a neighbor to a vertex in $S_2$, again by lemma \ref{lemma:S2inX}. In addition, it cannot have a neighbor in $S_1$, since if it did, Line \ref{S1toS2} would have moved these two vertices to the set $S_2$ during the run of the algorithm. Thus we see that the set $S_1$ forms an independent set in $G[S]$ and thus the maximum size of a component in $G[S]$ is 2, implying that $S$ is a feasible solution.
\begin{lemma} \label{lem1}
 $n \leq \sum_{i=1}^q(d_i + 1) \leq 2n$, where $n$ is the number of vertices in the input graph. 
\end{lemma}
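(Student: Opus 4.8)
The plan is to track the quantity $\sum_{i=1}^q (d_i+1)$ as a sum of the sizes of the vertex-batches removed from $V'$ across all iterations, and to argue that these batches partition $V$ (giving the lower bound) while each has size at most $d_i+1$... wait, that's the wrong direction. Let me reconsider: the claim is $\sum (d_i+1) \le 2n$, so I need an \emph{upper} bound on the total, and the lower bound $n \le \sum(d_i+1)$ should be the easy half. Let me lay this out.

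First I would establish $n \le \sum_{i=1}^q(d_i+1)$. In iteration $i$ the algorithm picks $v_i$, adds it to $S$, and then enlarges $X$; in all cases $V'$ shrinks to $V \setminus (S \cup X)$. The key observation is that every vertex of $V$ eventually leaves $V'$: a vertex leaves either by entering $S$ (as some $v_i$) or by entering $X$. Since each iteration removes at least the vertex $v_i$, and when $v_i$ is chosen it has degree $d_i$ in $G[V']$, the set of vertices removed in iteration $i$ — namely $\{v_i\}$ together with the newly added $X$-vertices — is a subset of $N_{G[V']}[v_i]$ (in the $S_1$ branch, $X$ grows by $N_{V'}(v_i)$ restricted to common neighbours; in the $S_2$ branch it grows by $N_{V'}(v_i) \cup N_{V'}(p)$, but $p$ was already in $S$ so was already removed, and $N_{V'}(p)$... hmm, this needs care). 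So the number of vertices removed in iteration $i$ is at most $d_i + 1$ plus possibly some neighbours of $p$. Since the removed batches over all iterations cover all of $V$ and are disjoint, $n = \sum_i (\text{batch size}_i)$; bounding each batch size appropriately gives both inequalities. Concretely, batch size $\ge 1$ always, so $q \le n$; and I want batch size $\le d_i+1$ plus a correction, summing to $\le 2n$.

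The cleaner route for the \emph{upper} bound is likely this: bound $\sum_{i=1}^q (d_i+1)$ by a double-counting of edges and vertices. When $v_i$ is selected, $d_i = d_{G[V']}(v_i)$ counts its neighbours in the current $V'$. I would charge each such neighbour-edge to a distinct vertex or edge of $G$: the $d_i$ neighbours of $v_i$ in $V'$ are, for the most part, destined to be removed (into $X$), and each vertex of $V$ gets charged $O(1)$ times. More precisely, $\sum_{i=1}^q d_i \le$ (number of vertices that ever enter $X$) $+ q \le (n - q) + q = n$ would give $\sum(d_i+1) \le n + q \le 2n$, using $q \le n$. So the heart of the argument is: each vertex $u \in V$ is counted among the "$d_i$ neighbours" of at most a bounded number of selected vertices $v_i$ before $u$ itself is removed from $V'$; and once counted enough times, $u$ is forced into $X$.

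The main obstacle will be handling the $S_1$-branch correctly, where $X$ is only augmented by \emph{common} neighbours $N_{V'}(v) \cap N_{V'}(t)$ over $t \in S_1$ — so a neighbour of $v_i$ that is not yet a common neighbour of two $S_1$-vertices stays in $V'$ and can be re-counted in a later iteration's $d_j$. I expect the right invariant is that a vertex $u$ can appear in at most two such neighbour-counts while remaining in $V'$: the first time, it becomes adjacent to one $S_1$-vertex; the second time it would become adjacent to a second, forcing it into $X$ (or the two $S_1$-vertices merge into an $S_2$-component, again removing $u$). Combined with Lemma \ref{lemma:S2inX} to control the $S_2$-branch, this caps $\sum d_i$ by roughly $2n$ and hence $\sum(d_i+1) \le$ something; the precise constants (to land exactly on $2n$) are where the careful bookkeeping lies, and I would do that charging argument explicitly rather than gesture at it.
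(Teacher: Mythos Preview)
Your upper-bound instinct is right and matches the paper. The paper phrases it as: call a vertex \emph{chosen} when it is selected as some $v_i$ and \emph{looked-at} when it lies in $N_{V'}(v_i)$ for the current $V'$. A vertex that is looked-at twice is immediately forced into $X$ (your case analysis in the last paragraph is exactly this), and a vertex that is later \emph{chosen} can have been looked-at at most once beforehand. Hence every vertex is probed (looked-at or chosen) at most twice, and since iteration $i$ issues exactly $d_i+1$ probes, $\sum_i(d_i+1)\le 2n$ on the nose. So the ``careful bookkeeping'' you worry about is just this one extra observation about chosen vertices; there is no slack to absorb.

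Two things in your write-up do not go through. First, the displayed claim $\sum_i d_i \le |X|+q \le n$ is false in general: vertices that end in $X$ may be looked-at twice, so the correct bound is $\sum_i d_i \le 2|X|+|S|=2n-q$, which only recovers $\sum(d_i+1)\le 2n$ and not the stronger $n$ you wrote. Second, and more important, your lower-bound strategy via per-iteration batch sizes really does break: in the $S_2$-branch the algorithm removes $\{v_i\}\cup N_{V'}(v_i)\cup N_{V'}(p)$, and $N_{V'}(p)\setminus N_{V'}(v_i)$ can be nonempty, so the batch can exceed $d_i+1$. The paper sidesteps this entirely with a one-line global count: every vertex that ever enters $X$ was looked-at at least once (either by the current $v_i$ or, in the $N_{V'}(p)$ case, back when $p$ was chosen), so $|X|\le \sum_i d_i$ and hence $n=|X|+|S|\le \sum_i d_i + q = \sum_i(d_i+1)$. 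Your batch idea can be salvaged by charging the removal of each $u\in N_{V'}(p)\setminus N_{V'}(v_i)$ back to the earlier iteration in which $p$ was chosen, but once you do that you have reproduced the paper's argument.
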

\begin{proof}
When we choose a vertex $v \in G[V']$ of minimum degree, we denote it as a {\em chosen} vertex and denote its neighbors in $G[V']$ as {\it looked-at} vertices. From the algorithm it is clear that a {\em chosen} vertex is always included in $S$. Every time a vertex is {\em looked-at}, exactly one of its neighbours is included in $S$. Once a vertex is {\em looked-at} twice, the algorithm includes it in the set $X$. Therefore, a vertex can be {\em looked-at} at most twice and can be {\em chosen} at most once. 

Based on these notions, it follows that $X$ is the set of vertices which are marked as {\em looked-at} twice by the algorithm. In the $i$th step of the algorithm a minimum degree vertex can be a {\em looked-at} vertex in $V'$. 
A chosen vertex $v_i$ can make at most $d_i$ vertices as {\em looked-at} twice. Thus inclusion of $v_i$ into $S$ can add at most $d_i$ vertices into the set $X$. Also the set $X$ is constructed in $q$ steps.
Therefore, $\sum_{i=1}^qd_i \geq |X|$. From this observation it follows that $\sum_{i=1}^q(d_i + 1) \geq |X| + |S| = n.$

The greedy algorithm finally partitions the vertex set $V$ as $S \dot{\cup} X$. Each vertex in $V$ is probed at most twice (as {\em chosen}, or {\em looked-at} and {\em chosen}, or {\em looked-at} and {\em looked-at}) before it is put in $S$ or $X$. At the $i$th step $(d_i +1)$ represents the number of {\em looked-at} vertices plus the {\em chosen} vertex. Therefore, $\sum_{i=1}^{q} (d_i + 1) \leq 2n$. 
\end{proof}

\begin{lemma} \label{lem2}
 $\sum_{i=1}^{q} (d_i + 1)d_i \leq 3n\overline{d}.$
\end{lemma}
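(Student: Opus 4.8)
The plan is to prove the inequality by a charging argument: charge the term $(d_i+1)d_i$ from iteration $i$ to the degrees in $G$ of the vertices of the closed neighbourhood of the chosen vertex $v_i$, then sum over iterations and bound, for each vertex $u$, how many iterations can charge to it. The bookkeeping from the proof of Lemma~\ref{lem1} (``chosen at most once, looked-at at most twice'') is exactly what I would reuse to control that last count.

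In more detail: I would write $V'_i$ for the current vertex set $V'$ at the start of iteration $i$, so $v_i$ has minimum degree $d_i = d_{G[V'_i]}(v_i)$ in $G[V'_i]$. The first step is the observation that $N_{G[V'_i]}[v_i]$ has $d_i+1$ vertices and that each such vertex $u$ satisfies $d_G(u) \ge d_{G[V'_i]}(u) \ge d_i$ — the first inequality because $G[V'_i]$ is an induced subgraph of $G$, the second because $v_i$ is of minimum degree in $G[V'_i]$. This yields
\[
(d_i+1)\,d_i \;=\; \sum_{u \in N_{G[V'_i]}[v_i]} d_i \;\le\; \sum_{u \in N_{G[V'_i]}[v_i]} d_G(u) .
\]

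Next I would sum over $i=1,\dots,q$ and switch the order of summation, obtaining $\sum_{i=1}^{q}(d_i+1)d_i \le \sum_{u\in V} d_G(u)\,c(u)$ with $c(u)=\bigl|\{\, i : u\in N_{G[V'_i]}[v_i]\,\}\bigr|$. A vertex $u$ contributes to $c(u)$ at iteration $i$ only if it is $v_i$ itself (it is ``chosen'' at step $i$) or a $G[V'_i]$-neighbour of $v_i$ (it is ``looked-at'' at step $i$); by the analysis already carried out for Lemma~\ref{lem1}, a vertex is chosen at most once and looked-at at most twice, so $c(u)\le 3$ for every $u$. Substituting this and using the handshake identity $\sum_{u\in V} d_G(u)=2|E|=n\overline{d}$ gives $\sum_{i=1}^{q}(d_i+1)d_i \le 3n\overline{d}$, as required.

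The step I expect to need the most care is justifying $c(u)\le 3$ cleanly, i.e.\ confirming that membership of $u$ in $N_{G[V'_i]}[v_i]$ really coincides with $u$ being ``chosen'' or ``looked-at'' at step $i$ in the sense used for Lemma~\ref{lem1}, and that the deletion rules of Algorithm~\ref{Disscociation-Algo} (a vertex enters $X$ and leaves $V'$ once it has been looked-at twice, and leaves $V'$ once chosen) indeed cap these counts. Everything else is the minimum-degree inequality together with routine counting; a finer case analysis would in fact give $c(u)\le 2$ and hence the stronger bound $2n\overline{d}$, but $c(u)\le 3$ is all that is needed here.
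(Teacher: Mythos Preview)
Your proof is correct and takes a somewhat different route from the paper's. The paper's argument assigns tokens to \emph{edges} of the current graph $G[V']$: at step $i$ every edge incident on a vertex of $N_{V'}[v_i]$ receives a token, which by minimality of $d_i$ accounts for at least $\tfrac12 d_i(d_i+1)$ tokens; a case analysis then follows an edge $(p,q)$ through successive iterations to show it collects at most three tokens before one endpoint leaves $V'$, yielding $\sum_i \tfrac12 d_i(d_i+1)\le 3|E|$. Your argument instead charges each step to the full $G$-degrees of the \emph{vertices} in $N_{G[V'_i]}[v_i]$ and bounds how many times any vertex is charged by directly reusing the ``chosen at most once, looked-at at most twice'' bookkeeping from Lemma~\ref{lem1}. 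This is cleaner and more modular: the edge-level case analysis disappears and all the work is inherited from Lemma~\ref{lem1}. As you observe, the sharper bound $c(u)\le 2$ (a vertex cannot be both looked-at twice and chosen) in fact gives $2n\overline{d}$, strictly stronger than the paper's $3n\overline{d}$; the same vertex-charging idea carried to Algorithm~\ref{bcis-Algo} would likewise improve Lemma~\ref{lem4}(b) from $(2k-1)n\overline{d}$ to $kn\overline{d}$.
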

\begin{proof}
Every time we choose a vertex $v \in V'=[V\setminus (S\cup X)]$, we assign a token to each edge $e$ incident on a vertex of $N_{V'}[v]$. 

Since the degree of vertex $v$ in the $i$th step is $d_i$ and it is of minimum degree, we assign at least $\frac 12 {d_i(d_i+1)}$ tokens in $i$th step.  We claim that at the end of the algorithm, an edge can have at most 3 tokens. Let $v_i$ be the vertex chosen in $i$th step. Let $(p, q)$ be an edge in $G[V']$ such that $p \in N_{V'}(v_i)$ and $q \notin N_{V'}(v_i)$. While including $v_i$ into $S$, we are assigning a token to $(p, q)$ and it remain in the subgraph $G[V']$ with updated vertex set $V'$. If a neighbor of $p$ or $p$ is chosen as a vertex of smallest degree in any of the subsequent steps then $(p, q)$ will get one more token and the updated graph $G[V']$ will not have this edge. So we assume that in a subsequent $j$th step let $v_j$ be chosen as smallest degree vertex and it a neighbor of $q$ and not a neighbor of $p$. While including $v_j$ in $S$ we are assigning the 2nd token to $(p, q)$ and it remains as an edge in the updated subgraph $G[V']$. Till now both $p$ and $q$ are {\em looked-at} once each. In a subsequent $k$th step when a vertex $u \in N_{V'}[p] \cup N_{V'}[q]$ is chosen and included in $S$, the edge $(p, q)$ receives the 3rd token and it will no more in the updated subgraph $G[V']$ (as $p$ or $q$ will be {\em looked-at} twice and will not be present in the updated set $V'$).

Therefore, we have $\sum_{i=1}^{q} \frac 12 (d_i + 1)d_i \leq 3|E| = \frac{3}{2} n \overline{d}$, and thus $\sum_{i=1}^{q} (d_i + 1)d_i \leq 3n \overline{d}.$
\end{proof}

\begin{lemma} \label{lem3}
The solution set returned by Algorithm \ref{Disscociation-Algo} has size at least $\frac n{3\overline{d}+2}$.
\end{lemma}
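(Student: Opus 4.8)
The plan is to combine the two inequalities already established in Lemma~\ref{lem1} and Lemma~\ref{lem2} by a single application of the Cauchy--Schwarz inequality. Recall that the set $S$ returned by Algorithm~\ref{Disscociation-Algo} has size exactly $q$, the number of iterations, so it suffices to prove $q \geq \frac{n}{3\overline{d}+2}$.

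First I would apply Cauchy--Schwarz (equivalently, the power-mean inequality, i.e. convexity of $x\mapsto x^2$) to the $q$ numbers $d_i+1$, $i=1,\dots,q$:
\[
\left(\sum_{i=1}^q (d_i+1)\right)^{\!2} \;\le\; q\,\sum_{i=1}^q (d_i+1)^2 .
\]
By the lower bound of Lemma~\ref{lem1}, namely $\sum_{i=1}^q(d_i+1)\ge n$, the left-hand side is at least $n^2$, so that $n^2 \le q\sum_{i=1}^q(d_i+1)^2$.

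The key algebraic observation for the next step is the splitting $(d_i+1)^2=(d_i+1)d_i+(d_i+1)$, which makes both earlier lemmas feed in directly: by Lemma~\ref{lem2} we have $\sum_{i=1}^q(d_i+1)d_i\le 3n\overline{d}$, and by the upper bound of Lemma~\ref{lem1} we have $\sum_{i=1}^q(d_i+1)\le 2n$. Hence $\sum_{i=1}^q (d_i+1)^2 \le 3n\overline{d}+2n = n(3\overline{d}+2)$.

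Combining the two displays gives $n^2 \le q\cdot n(3\overline{d}+2)$, and dividing by $n(3\overline{d}+2)>0$ yields $q \ge \frac{n}{3\overline{d}+2}$, which is the claim. I do not expect any real obstacle here, since the substantive combinatorial work was carried out in Lemmas~\ref{lem1} and~\ref{lem2}; the only points that need care are choosing to expand $(d_i+1)^2$ exactly so that the two prior bounds slot in, and orienting the Cauchy--Schwarz estimate correctly (we want a \emph{lower} bound on $q$, so $q$ must appear on the larger side).
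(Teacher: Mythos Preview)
Your proof is correct and follows essentially the same approach as the paper: both apply Cauchy--Schwarz to the numbers $d_i+1$, use the splitting $(d_i+1)^2=(d_i+1)d_i+(d_i+1)$ to invoke Lemma~\ref{lem2} and the upper bound of Lemma~\ref{lem1}, and plug in the lower bound $\sum_i(d_i+1)\ge n$ from Lemma~\ref{lem1} to conclude $q\ge n/(3\overline{d}+2)$.
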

\begin{proof}
 By using the inequalities in Lemma \ref{lem1} and Lemma \ref{lem2}, we have
 \begin{align*}
  \sum_{i=1}^q(d_i + 1)^2 &= \sum_{i=1}^qd_i(d_i +1) + \sum_{i=1}^q(d_i + 1) \\
  &\leq 3n \overline{d} + 2n.
 \end{align*}
  From the Cauchy-Schwarz inequality, we get
  \begin{align*}
   \sum_{i=1}^q(d_i + 1)^2 &\geq \frac{\left(\sum_{i=1}^q(d_i +1)\right)^2}{\sum_{i=1}^q1^2}\\
   &\geq \frac{n^2}{q}.
  \end{align*}
  Combining these two inequalities yields $q \geq \frac{n}{3\overline{d} + 2}$.
 \end{proof}

From the above lemma, we have the following result.
\begin{theorem}
\disso{} can be approximated within a factor of $(3 \overline{d} + 2)$, where $\overline{d}$ is the average degree of the input graph $G$.
\end{theorem}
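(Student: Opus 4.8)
The plan is to derive the approximation guarantee directly from Lemma \ref{lem3}, since almost all the work has already been packaged into that lemma. First I would recall that the algorithm always returns a feasible solution — this was argued immediately after Lemma \ref{lemma:S2inX} using the fact that $S_1$ is independent in $G[S]$ and that Lemma \ref{lemma:S2inX} forces every vertex of $S_2$ into a component of order exactly $2$. So the only remaining issue is comparing the size $q=|S|$ of the returned solution to the optimum.

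\medskip

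Next I would observe the trivial but essential upper bound on the optimum: any dissociation set has size at most $n$, because it is a subset of $V$. Hence if $S^*$ denotes a maximum dissociation set in $G$, then $|S^*|\le n$. Combining this with Lemma \ref{lem3}, which gives $q=|S|\ge \frac{n}{3\overline{d}+2}$, we get
\[
\frac{|S^*|}{|S|}\ \le\ \frac{n}{\,n/(3\overline{d}+2)\,}\ =\ 3\overline{d}+2,
\]
which is exactly the claimed bound. The chain of reasoning is: feasibility (already shown) $\Rightarrow$ $|S|\ge n/(3\overline{d}+2)$ (Lemma \ref{lem3}) $\Rightarrow$ $|S^*|\le n$ (trivial) $\Rightarrow$ ratio $\le 3\overline{d}+2$. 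I would also remark that the whole algorithm clearly runs in polynomial time, since each iteration removes at least one vertex from $V'$ (the chosen vertex enters $S$), so there are at most $n$ iterations, each doing only neighborhood bookkeeping.

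\medskip

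There is essentially no obstacle here: the theorem is a one-line corollary of Lemma \ref{lem3} together with the trivial bound $\OPT\le n$. The genuine difficulty was already absorbed into Lemmas \ref{lem1} and \ref{lem2} — the token/charging arguments establishing $\sum_i (d_i+1)\le 2n$ and $\sum_i (d_i+1)d_i\le 3n\overline{d}$ — and into the Cauchy–Schwarz step in Lemma \ref{lem3}. If I were to point to where care is needed, it would be in making sure the token argument of Lemma \ref{lem2} really caps every edge at $3$ tokens (the three cases $p$ looked-at, $q$ looked-at, then one endpoint looked-at twice), but that is in the already-given proof of that lemma, not in the proof of the theorem itself. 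So my proof of the theorem would be just the two sentences invoking feasibility, Lemma \ref{lem3}, and $\OPT\le n$.
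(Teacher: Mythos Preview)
Your proof is correct and matches the paper's own argument: the paper simply states the theorem as an immediate consequence of Lemma~\ref{lem3}, using the trivial bound $|S^*|\le n$ to convert the lower bound $|S|\ge n/(3\overline{d}+2)$ into the claimed approximation ratio. There is nothing to add.
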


\subsection{Approximation algorithm for \maxkcomp{k}}
In this section, we will generalize the $(3\overline{d} + 2)$ factor approximation algorithm for \disso{} to get an algorithm for \maxkcomp{k} which is a $(2k - 1)\overline{d} + k$ factor approximation algorithm. The following is a generalization of Algorithm \ref{Disscociation-Algo}.

\begin{algorithm2e}[H] 
\SetAlgoLined
\KwIn{An undirected simple graph $G=(V, E)$}
\KwOut{$S \subseteq V$ such that each component in $G[S]$ has at most $k$ vertices}
 $S= \emptyset$; $X=\emptyset$; $V' = V$; $i=1;$\\
 \While{$[V' \neq \emptyset]$}{
 choose a vertex $v \in V'$ of minimum degree in $G[V']$\;
 $d_i = d_{G[V']}(v)$\;
 $S = S \cup \{v\}$\;
 \For{each vertex $p \in V'$ and $p \neq v$}{ \label{algostep:BCISkrem}
   \If{$G[S\cup \{p\}]$ has a component of size at least $k+1$}{
    $X = X \cup\{p\}$\;
   }
 }
 $V' = V\setminus (S \cup X)$\;
 $i= i+1$\;
 }
 Return $S$\;
 \caption{Algorithm for \maxkcomp{k}}
 \label{bcis-Algo}
\end{algorithm2e}

By using a similar argument we have the following results.
\begin{lemma} \label{lem4}
Let $q$ be the number of iterations made by the while loop in the Algorithm \ref{bcis-Algo}. 
\\ 
 (a) $n \leq \sum_{i=1}^q(d_i + 1) \leq nk$ \\
 (b) $\sum_{i=1}^q d_i(d_i + 1) \leq \overline{d} (2k-1)n.$
\end{lemma}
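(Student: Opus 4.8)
The plan is to mirror the two bookkeeping arguments behind Lemma~\ref{lem1} and Lemma~\ref{lem2}, adapted to the more general deletion rule of Algorithm~\ref{bcis-Algo}. The starting point is an invariant: at the end of every iteration, every component of $G[S]$ has at most $k$ vertices. This holds because the vertex $v_i$ chosen in iteration $i$ lies in $V'$, hence survived the pruning step of iteration $i-1$, which means $G[S\cup\{v_i\}]$ had no component of size $\geq k+1$; adding $v_i$ only alters the component containing $v_i$, and the other components are unchanged and bounded by induction. With this in place I would adopt the \emph{chosen}/\emph{looked-at} vocabulary exactly as in Lemma~\ref{lem1}: $v_i$ is chosen in iteration $i$, its $d_i$ neighbours in $G[V']$ are looked-at there, and $\sum_{i=1}^q(d_i+1)$ is the total number of (chosen, looked-at) events, while $q=|S|$ and $|S|+|X|=n$.

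The quantitative heart of part (a), replacing ``a vertex is looked-at at most twice'' from the $k=2$ case, is the claim that a vertex is looked-at at most $k$ times. Each time a vertex $p$ is looked-at, a fresh neighbour of $p$ (the chosen vertex of that iteration) enters $S$; so after $p$ has been looked-at $r$ times and is still in $V'$, the component that would contain $p$ in $G[S\cup\{p\}]$ contains $p$ together with those $r$ distinct neighbours, hence has size $\geq r+1$, and this size is $\leq k$ because $p$ has not been pruned. Thus $p$ survives at most $k-1$ look-ats, so it is looked-at at most $k$ times overall, and at most $k-1$ times if it is eventually chosen. Summing, $\sum_i d_i=\sum_p(\#\text{look-ats of }p)\leq k|X|+(k-1)|S|$, whence $\sum_i(d_i+1)=\sum_i d_i+|S|\leq k(|X|+|S|)=nk$. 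For the lower bound I would show every $p$ ending in $X$ is looked-at at least once: if $p$ is pruned in iteration $i$, the component of $v_i$ in the updated $G[S]$ must be adjacent to $p$ (otherwise $p$'s would-be component is unchanged by that iteration and would have been oversized already), so $p$ is adjacent to $v_i$ or to a vertex of that component added earlier while $p\in V'$; either way $p$ was looked-at. Hence $\sum_i d_i\geq|X|$ and $\sum_i(d_i+1)\geq|X|+|S|=n$.

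For part (b) I would re-run the token argument of Lemma~\ref{lem2}: when $v_i$ is chosen, place one token on every edge of $G[V']$ incident to a vertex of $N_{V'}[v_i]$, and note that, since $v_i$ has minimum degree $d_i$ in $G[V']$, the set $N_{V'}[v_i]$ has $d_i+1$ vertices each of degree $\geq d_i$, so at least $\tfrac12 d_i(d_i+1)$ edges receive a token in iteration $i$ (the counting $\sum_{u}d(u)-e(N_{V'}[v_i])\geq(d_i+1)d_i-\binom{d_i+1}{2}$ is unchanged from Lemma~\ref{lem2}). It remains to bound the tokens accumulated by a fixed edge $e=(p,q)$. The edge survives in $G[V']$ until the first iteration $i^{\ast}$ in which one of $p,q$ leaves $V'$; in every iteration before $i^{\ast}$ both endpoints survive, so a token on $e$ there is caused by a \emph{surviving} look-at of $p$ or of $q$, and by the fact above there are at most $k-1$ of each, i.e. at most $2k-2$ tokens before $i^{\ast}$, plus at most one token in iteration $i^{\ast}$ itself: at most $2k-1$ in all. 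Therefore $\tfrac12\sum_{i=1}^q d_i(d_i+1)\leq(2k-1)|E|=(2k-1)\cdot\tfrac12 n\overline{d}$, giving $\sum_{i=1}^q d_i(d_i+1)\leq(2k-1)n\overline{d}$.

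The main obstacle is the ``looked-at at most $k$ times'' claim together with the companion $(2k-1)$-token estimate: both rely on the monotone growth of a surviving vertex's would-be component, which in turn needs the invariant that $G[S]$ never develops a component larger than $k$ and a careful check that each look-at contributes a genuinely new vertex to that would-be component. Pinning down the off-by-one in the token count — separating the at most $2k-2$ ``surviving'' tokens from the single terminal token in iteration $i^{\ast}$ — is the place where the write-up must be precise in order to land on $2k-1$ rather than $2k$, and it specializes correctly to the bounds $3$ for $k=2$ (Lemma~\ref{lem2}) and $1$ for $k=1$ (Tur\'an's greedy).
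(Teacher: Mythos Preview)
Your proposal is correct and follows essentially the same route as the paper: the chosen/looked-at bookkeeping with vertex tokens for part~(a) and edge tokens for part~(b), together with the key observation that a vertex can be looked-at at most $k$ times (and at most $k-1$ times if it is eventually chosen). Your write-up is in fact slightly more careful than the paper's in two places --- you justify the lower bound $\sum_i d_i\geq |X|$ explicitly via the component of $v_i$, and you split the $(2k-1)$-token bound cleanly into $2(k-1)$ surviving look-ats plus one terminal token --- but these are refinements of the same argument, not a different approach.
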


\begin{proof}
(a) By assigning token to the vertices, as given in the proof of Lemma \ref{lem1}, it can be proved that $\sum_{i=1}^q d_i \geq |X|$. Hence, we get $n \leq \sum_{i=1}^q(d_i + 1)$. For the other inequality, consider the following argument. In each iteration, when a vertex $v$ is picked, assign one token to that vertex and each of its neighbors in the graph $G[V']$. Thus, in the $i$th iteration, the number of tokens assigned is exactly $d_i + 1$, and the total number of tokens assigned is $\sum_{i=1}^q (d_i + 1)$. Now, we look at the maximum number of tokens that any particular vertex can get. Note that a vertex gets a token if it is either {\em chosen} or {\em looked-at}, and once a vertex is {\em chosen} or included in the set $X$, it gets no more tokens. Suppose that at the end of the algorithm, a vertex $u$ has $(k+1)$ or more tokens. This implies that either it was {\em looked-at} at least $k+1$ times and then moved to the set $X$ or it was {\em looked-at} at least $k$ times and then chosen. Neither of these cases is possible, since when the vertex $u$ is {\em looked-at} $k$ times $k$ of its neighbors have been {\em chosen}, and thus the Line \ref{algostep:BCISkrem} will remove $u$ after $k$ of its neighbors have been {\em chosen}. Thus, any vertex can have at most $k$ tokens, and thus the total number of tokens available is at most $nk$. This proves the second inequality. 

(b) In the $i$th iteration, assume that vertex $v$ is {\em chosen}. Assign a token to all the edges incident on $v$ and its neighbors from the set $V'$ of the $i$th iteration. As argued in the case of the dissociation set problem, due to minimality of $d_i$, we assign at least $\sum_{i=1}^q \frac 12 d_i(d_i +1)$ tokens. Now, we produce an upper bound on the total number of tokens by seeing the maximum number of tokens that can be assigned to any edge. Consider an edge $e = (u,v)$ - it gets a token whenever at least one of its endpoints is either {\em looked-at} or {\em chosen}. Now, from the proof of (a), it follows that any vertex is {\em looked-at} at most $k$ times and put into the set $X$,  or {\em looked-at} at most $k-1$ times and then {\em chosen}. Thus the edge $e$ can have at most $(2k-1)$ tokens, since if it receives that amount, then at least one of its endpoints has been moved to either the set $X$ or the set $S$ and the edge no longer receives any tokens. Thus the total number of tokens that any edge can receive is $(2k-1)$ and the total number of tokens given out is at most $|E|(2k-1)$. Thus we get $\sum_{i=1}^q d_i(d_i + 1) \leq 2|E|(2k-1) = \overline{d} (2k -1) n$. 
\end{proof}

By using the inequalities in Lemma \ref{lem4} and Cauchy-Schwarz inequality, we have
\begin{theorem}
 \maxkcomp{k} can be approximated within a factor of $(2k - 1)\overline{d} + k$, where $\overline{d}$ is the average degree of the input graph $G$.
\end{theorem}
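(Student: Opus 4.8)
The plan is to mirror the argument that produced the $(3\overline{d}+2)$ bound for \disso{} in Lemma \ref{lem3} and its corollary, now feeding in the two estimates of Lemma \ref{lem4}. First I would note that Algorithm \ref{bcis-Algo} returns a \emph{feasible} solution: immediately after a vertex $v$ is added to $S$ in an iteration, Line \ref{algostep:BCISkrem} scans every surviving vertex $p$ and moves it into $X$ whenever $G[S\cup\{p\}]$ would contain a component of order at least $k+1$; hence no vertex that could enlarge a component beyond $k$ ever remains available to be \emph{chosen} later, and in particular $G[S]$ has every component of order at most $k$ when the loop halts. Since exactly one vertex enters $S$ per iteration and the loop runs $q$ times, $|S|=q$, so it suffices to lower bound $q$.

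Next I would combine the two parts of Lemma \ref{lem4}. Writing $\sum_{i=1}^q (d_i+1)^2 = \sum_{i=1}^q d_i(d_i+1) + \sum_{i=1}^q (d_i+1)$, the bound in Lemma \ref{lem4}(b) controls the first sum and the upper bound in Lemma \ref{lem4}(a) controls the second, giving $\sum_{i=1}^q (d_i+1)^2 \leq (2k-1)\overline{d}\,n + kn = \big((2k-1)\overline{d}+k\big)n$. On the other hand, Cauchy--Schwarz gives $\sum_{i=1}^q (d_i+1)^2 \geq \big(\sum_{i=1}^q (d_i+1)\big)^2/q$, and the lower bound $\sum_{i=1}^q (d_i+1)\geq n$ from Lemma \ref{lem4}(a) turns this into $\sum_{i=1}^q (d_i+1)^2 \geq n^2/q$. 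Chaining the two estimates yields $n^2/q \leq \big((2k-1)\overline{d}+k\big)n$, i.e. $q \geq n/\big((2k-1)\overline{d}+k\big)$.

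Finally, since every $k$-component set has at most $n$ vertices, ${\sf comp}_k(G)\leq n$, and therefore ${\sf comp}_k(G)/|S| \leq n/q \leq (2k-1)\overline{d}+k$; the algorithm plainly runs in polynomial time, so this is the claimed approximation guarantee. Conditional on Lemma \ref{lem4}, the theorem is thus a short computation, so I do not anticipate a real obstacle at this level. The delicate point sits one step upstream, in the charging argument behind Lemma \ref{lem4}(b) (the analogue of Lemma \ref{lem2}): one must argue that an endpoint of an edge can be \emph{looked-at} at most $k$ times before being forced into $X$, or at most $k-1$ times before being \emph{chosen}, so that an edge is destroyed after accumulating at most $k+(k-1)=2k-1$ tokens — which is exactly where the factor $2k-1$ enters and where I would concentrate the care in the full write-up.
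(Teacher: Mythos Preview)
Your proposal is correct and follows essentially the same route as the paper: the paper simply states that the theorem follows from Lemma \ref{lem4} together with the Cauchy--Schwarz inequality, exactly as in the proof of Lemma \ref{lem3}, and your write-up spells out that computation (splitting $\sum(d_i+1)^2$, applying the two bounds of Lemma \ref{lem4}, then Cauchy--Schwarz and ${\sf comp}_k(G)\le n$). Your additional remarks on feasibility of the output and on where the $2k-1$ factor originates in Lemma \ref{lem4}(b) are accurate and consistent with the paper's argument.
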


\section{Upper bound for \maxkcomp{k} with large $\Delta$}

In this section, we provide a reduction from the \disso{} problem to the \textsc{Max-IS} problem, and use the upper bound result for \textsc{Max-IS} from~\cite{halldorsson2004approximations} to provide a $O(\frac{\Delta \log \log \Delta}{\log \Delta})$ approximation factor, for sufficiently large values of $\Delta$. 

\begin{lemma} \label{lem:BCIS2-IS-redn}
 \disso{} $\leq_{\text{\sf AP}}$ \textsc{Max-IS}.
\end{lemma}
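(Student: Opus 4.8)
The idea is to realise a maximum dissociation set of $G=(V,E)$ as a maximum independent set of an auxiliary \emph{conflict graph} $H$ whose vertices are the possible \emph{pieces} of a dissociation set. Recall that a dissociation set $S$ of $G$ is exactly a vertex‑disjoint union of components each isomorphic to $K_1$ or $K_2$, and that the $K_1$‑components are automatically an independent set of $G$ (two adjacent singleton components would induce an edge), so this decomposition into pieces is without loss. Accordingly $V(H)$ will contain one token $\sigma_v$ for every $v\in V$ (meaning ``$\{v\}$ is a $K_1$‑component'') and, for every $e\in E$, a token $\rho_e$ of weight $2$ (meaning ``$V(e)$ is a $K_2$‑component''), all $\sigma_v$ having weight $1$. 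To keep the target an unweighted \textsc{Max-IS} instance I will instead replace each $\rho_e$ by two non‑adjacent \emph{twins} $\rho_e^1,\rho_e^2$ with identical neighbourhoods; then any independent set of $H$ can be enlarged (never shrunk) in polynomial time so that it contains each such pair entirely or not at all. The edges of $H$ are exactly the \emph{conflicts}: $\sigma_v\sigma_w\in E(H)$ iff $vw\in E$; $\sigma_v\rho_e\in E(H)$ iff $v$ is an endpoint of $e$ or is adjacent in $G$ to an endpoint of $e$; and $\rho_e\rho_f\in E(H)$ (for $e\ne f$) iff $e$ and $f$ share a vertex or some edge of $G$ joins $V(e)$ to $V(f)$. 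Constructing $H$ (the map $f$) and decoding a normalised independent set as a family of pieces and taking the union of their vertex sets (the map $g$) are both clearly polynomial time.

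The heart of the argument is that the decoding $g$ and its reverse are mutually inverse size‑preserving bijections between normalised independent sets of $H$ and dissociation sets of $G$. In one direction, if $S$ is a dissociation set, decoding its $K_1$‑ and $K_2$‑components into the corresponding $\sigma$‑ and $\rho$‑tokens yields an independent set of $H$ of weight $|S|$: any conflict edge joining two chosen tokens would, by the three definitions above, force either two chosen pieces to overlap or a vertex of $S$ to have two neighbours in $S$, contradicting that $G[S]$ has maximum degree $1$. In the other direction, let $T$ be a normalised independent set, put $A=\{v:\sigma_v\in T\}$, $B=\{e:\rho_e^1\in T\}$ and $S=A\cup\bigcup_{e\in B}V(e)$. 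Absence of $\sigma$–$\sigma$ conflicts makes $A$ independent in $G$; absence of $\rho$–$\rho$ conflicts makes the edges of $B$ pairwise vertex‑disjoint and pairwise non‑adjacent in $G$ (an induced matching), so $G\bigl[\bigcup_{e\in B}V(e)\bigr]$ has maximum degree $1$; absence of $\sigma$–$\rho$ conflicts makes $A$ disjoint from $\bigcup_{e\in B}V(e)$ with no $G$‑edge between them. Hence $G[S]$ is the disjoint union of an edgeless graph and an induced matching, so $S$ is a dissociation set, and $|S|=|A|+2|B|=w(T)$.

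Combining the two directions gives $\alpha(H)={\sf comp}_2(G)$. Moreover, given any $r$‑approximate independent set of $H$, the normalisation (which does not decrease the size) followed by $g$ produces a dissociation set of $G$ whose ratio to ${\sf comp}_2(G)$ is again at most $r$; hence $(f,g)$ is an {\sf AP}-reduction (in fact with $\alpha=1$: it preserves the approximation ratio exactly), which is \disso{} $\leq_{\text{\sf AP}}$ \textsc{Max-IS}. For the theorem that follows it is also useful to note that the neighbourhood of each token of $H$ is confined to the pieces meeting a fixed vertex or one of its $G$‑neighbours, so the degrees of $H$ are controlled by $\Delta$; the explicit bound is read off from the three conflict types.

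The step I expect to be the main obstacle is pinning down the conflict relation \emph{exactly}. Dissociation (``maximum degree $\le 1$'') is not a property that can be tested pair‑by‑pair on the vertices of $S$ — three vertices forming a path are pairwise harmless but jointly forbidden — so the entire point of decomposing $S$ into $K_1$‑ and $K_2$‑pieces is to convert the constraint into a binary relation, and one must check both that every genuinely forbidden coexistence of pieces is declared a conflict (including the ``distance‑two'' ones, where a solo vertex abuts one end of an edge‑piece, or two edge‑pieces are joined by a single $G$‑edge) and that no legitimate dissociation set is thereby excluded (this is where the observation that $K_1$‑components form an independent set lets us outlaw adjacent $\sigma$‑tokens at no cost). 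The twin/weight bookkeeping needed to land in unweighted \textsc{Max-IS} is the only other delicate point, and it is routine.
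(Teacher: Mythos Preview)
Your construction is correct and establishes the {\sf AP}-reduction, but the paper's argument is far simpler and, for the intended application, strictly better. The paper uses the \emph{identity} reduction: take $G$ itself as the \textsc{Max-IS} instance, return any independent set $I$ as the dissociation set, and observe that from a maximum dissociation set $S^*$ one obtains an independent set of size at least $\tfrac12|S^*|$ by keeping all isolated vertices of $G[S^*]$ and one endpoint of each edge. This gives $|S^*|\le 2\alpha(G)$ and hence $\frac{|S^*|}{|S|}\le 2\frac{\alpha(G)}{|I|}$, an {\sf AP}-reduction with $\alpha=2$.

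The trade-off is clear. Your conflict-graph reduction is exact ($\alpha=1$), whereas the paper loses a factor~$2$; on the other hand, the paper's reduction preserves the vertex set and in particular the maximum degree $\Delta$, while your graph $H$ has $n+2m$ vertices and maximum degree $\Theta(\Delta^2)$ (a token $\sigma_v$ is adjacent to the two twins of every edge touching $N_G[v]$). Since the very next theorem feeds $\Delta_H$ into the $O(\Delta\log\log\Delta/\log\Delta)$ bound for \textsc{Max-IS}, the paper's approach yields $O(\Delta\log\log\Delta/\log\Delta)$ for \disso{}, whereas yours would only give $O(\Delta^2\log\log\Delta/\log\Delta)$. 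So for the lemma as stated your proof is fine and in a sense tighter, but it is considerably more elaborate and less suited to the downstream use.
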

\begin{proof}
The reduction is a straight forward one, where we solve the \textsc{Max-IS} on the input instance. Formally, the reduction is the identity mapping, and given an instance $G=(V, E)$ of \disso{}, it maps $G=(V,E)$ as an instance of \textsc{Max-IS}.

Given an independent set $I$ in $G$, we construct a dissociation set $S = I$ from $I$ with $|S|=|I|$. Given a maximum cardinality dissociation set $S^*$ in $G$, we construct an independent set $I$ with $|I| \geq \frac12|S^*|$ as follows.
The induced graph $G[S^*]$ will have maximum degree of at most 1, and we can partition the set $S^*$ into two sets $S_0 = \{v \in S^* | d_{G[S^*]}(v) = 0\}$ and $S_1 = \{v \in S^* | d_{G[S^*]}(v) = 1\}$. It is important to observe that $G[S_1]$ forms an induced matching. We construct an independent set $I$ in $G$ by taking all the vertices in $S_0$ and exactly one vertex from each edge in $G[S_1]$. This set $I$ of vertices is an independent set and $|I| = |S_0| + \frac12|S_1| \geq \frac{1}{2}(|S_0| + |S_1|) = \frac{1}{2}|S^*|$. Thus we have $|S^*| \leq 2|I| \leq 2\alpha(G)$, yielding 
\[\frac{|S^*|}{|S|} \leq 2\frac{\alpha(G)}{|I|}.\]
\end{proof}

\begin{theorem} {\rm~\cite{halldorsson2004approximations}} 
\label{thm:IS-largeDelta-upperbound}
{\sc Max-IS} can be approximated within a factor of $O(\frac{\Delta \log \log \Delta}{\log \Delta})$ for sufficiently large values of $\Delta$, where $\Delta$ is the maximum degree of the input graph.
\end{theorem}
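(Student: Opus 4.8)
Theorem~\ref{thm:IS-largeDelta-upperbound} is quoted verbatim from Halld\'orsson~\cite{halldorsson2004approximations}, so the ``proof'' here is really just the citation and we use the statement as a black box; but it is still worth recording how one would establish a bound of this shape, both because it explains the slightly awkward $\log\log\Delta$ factor and because it is exactly the ingredient Section~6 plugs into. The plan would be to combine a \emph{subgraph-removal} phase with a \emph{Ramsey-type dichotomy}. First, given $G$ of maximum degree $\Delta$, one repeatedly detects a small dense induced subgraph -- a clique $K_r$ for a carefully chosen size $r$ -- and deletes it; since any clique meets at most one vertex of any independent set, deleting cliques whose vertices total $t$ costs at most $t$ from the optimum, while the number of removal rounds stays controlled. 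The purpose of this cleanup is to reach a $K_r$-free graph.

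Second, on a $K_r$-free graph with $N$ vertices and average degree $d$ one would invoke an algorithmic form of the Ajtai--Koml\'os--Szemer\'edi / Shearer-type estimate: such a graph has an independent set of size $\Omega_r\!\big(\tfrac{N}{d}\log d\big)$ (with the $K_r$-free case incurring an extra $\log\log$-type loss over the triangle-free bound), and this set can be produced in polynomial time by a greedy rule analysed through the local degree distribution, or by a R\"odl-nibble style random process. Comparing the independent-set size so obtained with the trivial bound $\alpha(G)\le n$, and charging the optimum lost in the removal phase, yields a ratio of the form $O\!\big(\tfrac{\Delta\log\log\Delta}{\log\Delta}\big)$; the parameter $r$ is chosen to balance the ``many dense spots'' regime (handled by removal) against the ``locally sparse'' regime (handled by Shearer), and that balancing is where the $\log\log\Delta$ creeps in, as opposed to a clean $\Delta/\log\Delta$. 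All of this needs $\Delta$ large enough for the Ramsey/clique estimates to be meaningful, which is the ``sufficiently large $\Delta$'' hypothesis.

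The main obstacle is the \emph{algorithmic} content of the second step: the Shearer-type lower bound on $\alpha$ is classically proved by deletion or entropy arguments that are not manifestly constructive, so the hard part is replacing them with a genuinely polynomial-time extraction procedure and then checking that the removal phase and the sparse-case routine compose without their losses multiplying. Once Theorem~\ref{thm:IS-largeDelta-upperbound} is available, the payoff for \maxkcomp{k} is short: extending Lemma~\ref{lem:BCIS2-IS-redn}, picking one vertex from each component of an optimal $k$-component set yields an independent set (vertices from distinct components of $G[S]$ are non-adjacent), so $\alpha(G)\le {\sf comp}_k(G)\le k\,\alpha(G)$; the reduction is again the identity map, hence $\Delta$-preserving and ``large-$\Delta$''-preserving, so it is an {\sf AP}-reduction with size amplification $k$, and composing it with the algorithm of Theorem~\ref{thm:IS-largeDelta-upperbound} gives an $O\!\big(\tfrac{k\Delta\log\log\Delta}{\log\Delta}\big)$-approximation for \maxkcomp{k}.
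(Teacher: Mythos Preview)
Your proposal is correct and aligns with the paper's treatment: Theorem~\ref{thm:IS-largeDelta-upperbound} is stated in the paper without proof, simply cited to~\cite{halldorsson2004approximations} and used as a black box, exactly as you identify in your first sentence. Your additional sketch of the subgraph-removal/Ramsey-type argument and the outline of the {\sf AP}-reduction to obtain Theorem~\ref{thm:BCISk-largeDelta-UB} go beyond what the paper itself records but are consistent with the surrounding text.
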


From Lemma \ref{lem:BCIS2-IS-redn} and Theorem \ref{thm:IS-largeDelta-upperbound}, we get the following result,
\begin{theorem} \label{thm:BCIS2-largeDelta-UB}
\disso{} is approximable within a factor of $O(\frac{\Delta \log \log \Delta}{\log \Delta})$ for sufficiently large values of $\Delta$, where $\Delta$ is the maximum degree of the input graph.
\end{theorem}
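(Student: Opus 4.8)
The plan is to compose Lemma~\ref{lem:BCIS2-IS-redn} with Theorem~\ref{thm:IS-largeDelta-upperbound}, exactly as the section heading suggests. Given an instance $G=(V,E)$ of \disso{} with maximum degree $\Delta$, I would first note that under the reduction of Lemma~\ref{lem:BCIS2-IS-redn} the same graph $G$ is the corresponding instance of \textsc{Max-IS}, and since the reduction is the identity map on graphs its maximum degree is still $\Delta$. I would then run the algorithm of Theorem~\ref{thm:IS-largeDelta-upperbound} on $G$ to obtain an independent set $I$ with $|I| \ge \alpha(G)/f(\Delta)$, where $f(\Delta)=O\!\left(\frac{\Delta\log\log\Delta}{\log\Delta}\right)$, valid for $\Delta$ large enough.

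Next I would appeal to both halves of the correctness argument in Lemma~\ref{lem:BCIS2-IS-redn}. Every independent set is a dissociation set, so $S:=I$ is a feasible solution of \disso{} on $G$; and a maximum dissociation set $S^*$ can be pruned (taking $S_0$ and one endpoint of each edge of the induced matching $G[S_1]$) to an independent set of size at least $\tfrac12|S^*|$, so $\alpha(G)\ge \tfrac12|S^*|$. Chaining these gives $|S|=|I|\ge \alpha(G)/f(\Delta)\ge |S^*|/\bigl(2f(\Delta)\bigr)$, i.e.\ the returned set is within a factor $2f(\Delta)=O\!\left(\frac{\Delta\log\log\Delta}{\log\Delta}\right)$ of the optimum, as required.

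There is no real obstacle here --- the whole content is already packaged in Lemma~\ref{lem:BCIS2-IS-redn} and Theorem~\ref{thm:IS-largeDelta-upperbound}, and what remains is a two-line chaining of inequalities together with the observation that a $\rho$-approximation for \textsc{Max-IS} yields a $2\rho$-approximation for \disso{}. The only point worth stating explicitly is that the degree parameter $\Delta$ governing the \textsc{Max-IS} bound coincides with the one in the claimed bound for \disso{}, which holds because the reduction does not modify the graph; consequently the ``sufficiently large $\Delta$'' hypothesis transfers verbatim.
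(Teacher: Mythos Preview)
Your proposal is correct and is exactly the approach the paper takes: the theorem is stated in the paper as an immediate consequence of Lemma~\ref{lem:BCIS2-IS-redn} and Theorem~\ref{thm:IS-largeDelta-upperbound}, and your write-up simply spells out the two-line chaining of inequalities (together with the observation that the identity reduction preserves $\Delta$) that the paper leaves implicit.
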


The reduction in the proof of Lemma \ref{lem:BCIS2-IS-redn} is also a reduction from \maxkcomp{k} to \textsc{Max-IS}. Here, given an instance $G=(V, E)$ of \maxkcomp{k}, we take $G=(V, E)$ as an instance of \textsc{Max-IS}. If $I$ is an independent set in $G$ then $S = I$ is also a $k$-component set in $G$. From a maximum cardinality $k$-component set  $S^*$ in the graph $G$, we can construct an independent set $I$ in $G$ with $|I| \geq \frac{1}{k}|S^*|$. Therefore, we have  $|S^*| \leq k|I| \leq k\alpha(G)$. From these inequalities we have 
\[\frac{|S^*|}{|S|} \leq k\frac{\alpha(G)}{|I|}\] and the following result.

\begin{theorem} \label{thm:BCISk-largeDelta-UB}
 For $k>0$, \maxkcomp{k} is approximable within a factor of $O(\frac{k\Delta \log \log \Delta}{\log \Delta})$ for sufficiently large values of $\Delta$, where $\Delta$ is the maximum degree of the input graph.
\end{theorem}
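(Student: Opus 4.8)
The plan is to reuse the identity reduction from the proof of Lemma~\ref{lem:BCIS2-IS-redn}, this time reading an instance $G=(V,E)$ of \maxkcomp{k} directly as an instance of \textsc{Max-IS}, and then feeding it into the approximation algorithm of Theorem~\ref{thm:IS-largeDelta-upperbound}. First I would record the two directions of the correspondence between feasible solutions. On one side, any independent set $I$ in $G$ is a $k$-component set, since each of its vertices is an isolated component in $G[I]$. On the other side, given a maximum $k$-component set $S^*$, I would build an independent set $I$ by picking one vertex from each connected component of $G[S^*]$: vertices taken from distinct components are non-adjacent in $G$, so $I$ is independent, and since each component has at most $k$ vertices, $G[S^*]$ has at least $|S^*|/k$ components, giving $|I|\ge \frac1k|S^*|$, hence $|S^*|\le k\,\alpha(G)$.

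Next I would run the $O(\frac{\Delta\log\log\Delta}{\log\Delta})$-approximation of Theorem~\ref{thm:IS-largeDelta-upperbound} on $G$ (whose maximum degree is exactly $\Delta$, since the reduction is the identity map) to obtain an independent set $I$ with $\alpha(G)\le c\,\frac{\Delta\log\log\Delta}{\log\Delta}\,|I|$ for all sufficiently large $\Delta$, and then output $S=I$ as a solution to \maxkcomp{k}. Chaining the inequalities yields
\[
\frac{|S^*|}{|S|}\;\le\;\frac{k\,\alpha(G)}{|I|}\;\le\; c\,k\,\frac{\Delta\log\log\Delta}{\log\Delta},
\]
which is the desired $O(\frac{k\Delta\log\log\Delta}{\log\Delta})$ guarantee.

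The hard part is essentially nonexistent: the whole argument is a composition of a ratio-preserving (identity) reduction with a known black-box approximation, exactly as in Lemma~\ref{lem:BCIS2-IS-redn} and Theorem~\ref{thm:BCIS2-largeDelta-UB} for the case $k=2$. The only points needing care are that the reduction preserves the maximum degree — immediate here — and that the factor $k$ lost when moving between $k$-component sets and independent sets enters as a multiplicative penalty on the approximation ratio (through $|S^*|\le k\,\alpha(G)$), rather than in the favorable direction; one should also check that the ``sufficiently large $\Delta$'' hypothesis of Theorem~\ref{thm:IS-largeDelta-upperbound} transfers unchanged, which it does since $\Delta$ is identical in the two instances.
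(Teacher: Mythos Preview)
Your proposal is correct and follows essentially the same approach as the paper: use the identity reduction to \textsc{Max-IS}, observe that any independent set is a $k$-component set while a maximum $k$-component set yields an independent set of size at least $|S^*|/k$, and then invoke Theorem~\ref{thm:IS-largeDelta-upperbound}. Your write-up is in fact slightly more explicit than the paper's (e.g.\ spelling out the one-vertex-per-component construction and the preservation of $\Delta$), but the argument is identical.
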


\section{Conclusion}

We have proved that \maxkcomp{k} is hard to approximate like \textsc{Max-IS}. We also show that for $k= O(\frac{n}{\log n})$, \maxkcomp{k} is hard to approximate within a factor of $O(\log n)$. We generalize the greedy algorithm for \textsc{Max-IS} to obtain an algorithm for \maxkcomp{k} with an approximation factor of $(2k -1)\overline{d} + k$. This approximation factor is a generalization of Tur\'{a}n's bound for \textsc{Max-IS}.


\begin{thebibliography}{10}
\expandafter\ifx\csname url\endcsname\relax
  \def\url#1{\texttt{#1}}\fi
\expandafter\ifx\csname urlprefix\endcsname\relax\def\urlprefix{URL }\fi
\expandafter\ifx\csname href\endcsname\relax
  \def\href#1#2{#2} \def\path#1{#1}\fi

\bibitem{orlovich2011complexity}
Y.~Orlovich, A.~Dolgui, G.~Finke, V.~Gordon, F.~Werner, The complexity of
  dissociation set problems in graphs, Discrete Applied Mathematics 159~(13)
  (2011) 1352--1366.

\bibitem{hastad1996clique}
J.~H\aa{}stad, Clique is hard to approximate within $n^{1-\epsilon}$, in:
  Foundations of Computer Science, 1996. Proceedings., 37th Annual Symposium
  on, IEEE, 1996, pp. 627--636.

\bibitem{zuckerman2006linear}
D.~Zuckerman, Linear degree extractors and the inapproximability of max clique
  and chromatic number, in: Proceedings of the thirty-eighth annual ACM
  symposium on Theory of computing, ACM, 2006, pp. 681--690.

\bibitem{nakamura2001revision}
D.~Nakamura, A.~Tamura, A revision of minty's algorithm for finding a maximum
  weight stable set of a claw-free graph, Journal of the Operations Research
  Society of Japan 44~(2) (2001) 194--204.

\bibitem{lokshantov2014independent}
D.~Lokshantov, M.~Vatshelle, Y.~Villanger, Independent set in ${P}_5$-free
  graphs in polynomial time, in: Proceedings of the Twenty-Fifth Annual
  ACM-SIAM Symposium on Discrete Algorithms, Society for Industrial and Applied
  Mathematics, 2014, pp. 570--581.

\bibitem{grotschel2012geometric}
M.~Gr{\"o}tschel, L.~Lov{\'a}sz, A.~Schrijver, Geometric algorithms and
  combinatorial optimization, Vol.~2, Springer Science \& Business Media, 2012,
  pp. 296--298.

\bibitem{baker1994approximation}
B.~S. Baker, Approximation algorithms for np-complete problems on planar
  graphs, Journal of the ACM (JACM) 41~(1) (1994) 153--180.

\bibitem{papaYa}
C.~H. Papadimitriou, M.~Yannakakis, Optimization, approximation, and complexity
  classes, Journal of computer and system sciences 43~(3) (1991) 425--440.

\bibitem{hochbaum1983efficient}
D.~S. Hochbaum, Efficient bounds for the stable set, vertex cover and set
  packing problems, Discrete Applied Mathematics 6~(3) (1983) 243--254.

\bibitem{halldorsson1997greed}
M.~M. Halld{\'o}rsson, J.~Radhakrishnan, Greed is good: Approximating
  independent sets in sparse and bounded-degree graphs, Algorithmica 18~(1)
  (1997) 145--163.

\bibitem{YannakakisM}
M.~Yannakakis, Node-deletion problems on bipartite graphs, SIAM Journal on
  Computing 10~(2) (1981) 310--327.

\bibitem{Boliac2004OnCT}
R.~Boliac, K.~Cameron, V.~V. Lozin, On computing the dissociation number and
  the induced matching number of bipartite graphs, Ars Comb. 72.

\bibitem{papadimitriou1982complexity}
C.~H. Papadimitriou, M.~Yannakakis, The complexity of restricted spanning tree
  problems, Journal of the ACM (JACM) 29~(2) (1982) 285--309.

\bibitem{fujito}
T.~Fujito, A unified approximation algorithm for node-deletion problems,
  Discrete applied mathematics 86~(2) (1998) 213--231.

\bibitem{zhang2014approximation}
Y.~Zhang, Y.~Shi, Z.~Zhang, Approximation algorithm for the minimum weight
  connected k-subgraph cover problem, Theoretical Computer Science 535 (2014)
  54--58.

\bibitem{bar2004local}
R.~Bar-Yehuda, K.~Bendel, A.~Freund, D.~Rawitz, Local ratio: A unified
  framework for approximation algorithms. in memoriam: Shimon even 1935-2004,
  ACM Computing Surveys (CSUR) 36~(4) (2004) 422--463.

\bibitem{halldorsson2004approximations}
M.~M. Halld{\'o}rsson, Approximations of weighted independent set and
  hereditary subset problems, in: Graph Algorithms And Applications 2, World
  Scientific, 2004, pp. 3--18.

\bibitem{ACP95}
G.~Ausiello, Complexity and Approximability Properties: Combinatorial
  Optimization Problems and Their Approximability Properties, Springer, 1999.

\end{thebibliography}
\end{document}